\newcommand{\remove}[1]{}
\definecolor{blue}{rgb}{0.274,0.392,0.666}
\definecolor{red}{rgb}{0.627,0.117,0.156}
\title{L-Drawings of Directed Graphs}
\author{Patrizio Angelini\inst{1}, Giordano Da Lozzo\inst{2}, Marco Di Bartolomeo\inst{2},\\ Valentino Di Donato\inst{2}, Maurizio Patrignani\inst{2}, Vincenzo Roselli\inst{2}, Ioannis G. Tollis\inst{3}
}
\institute{
Wilhelm-Schickard-Institut f\"ur Informatik, Universit\"at T\"ubingen, Germany \and
Department of Engineering, Roma Tre University, Italy \and
University of Crete and Institute of Computer Science-FORTH, Greece}
\begin{document}

\pagestyle{plain}
\setcounter{page}{1}

\maketitle

\begin{abstract}
We introduce \emph{L-drawings}, a novel paradigm for representing directed graphs aiming at combining the readability features of orthogonal drawings with the expressive power of matrix representations. In an L-drawing, vertices have exclusive $x$- and $y$-coordinates and edges consist of two segments, one exiting the source vertically and one entering the destination horizontally.

We study the problem of computing L-drawings using minimum ink. We prove its NP-completeness and provide a heuristics based on a polynomial-time algorithm that adds a vertex to a drawing using the minimum additional ink. We performed an experimental analysis of the heuristics which confirms its effectiveness.
\end{abstract}

\section{Introduction}\label{se:intro}

\begin{figure}[tb]
\begin{center}
\begin{tabular}{c @{\hspace{1em}} c  c  }
   \includegraphics[width=3.9cm]{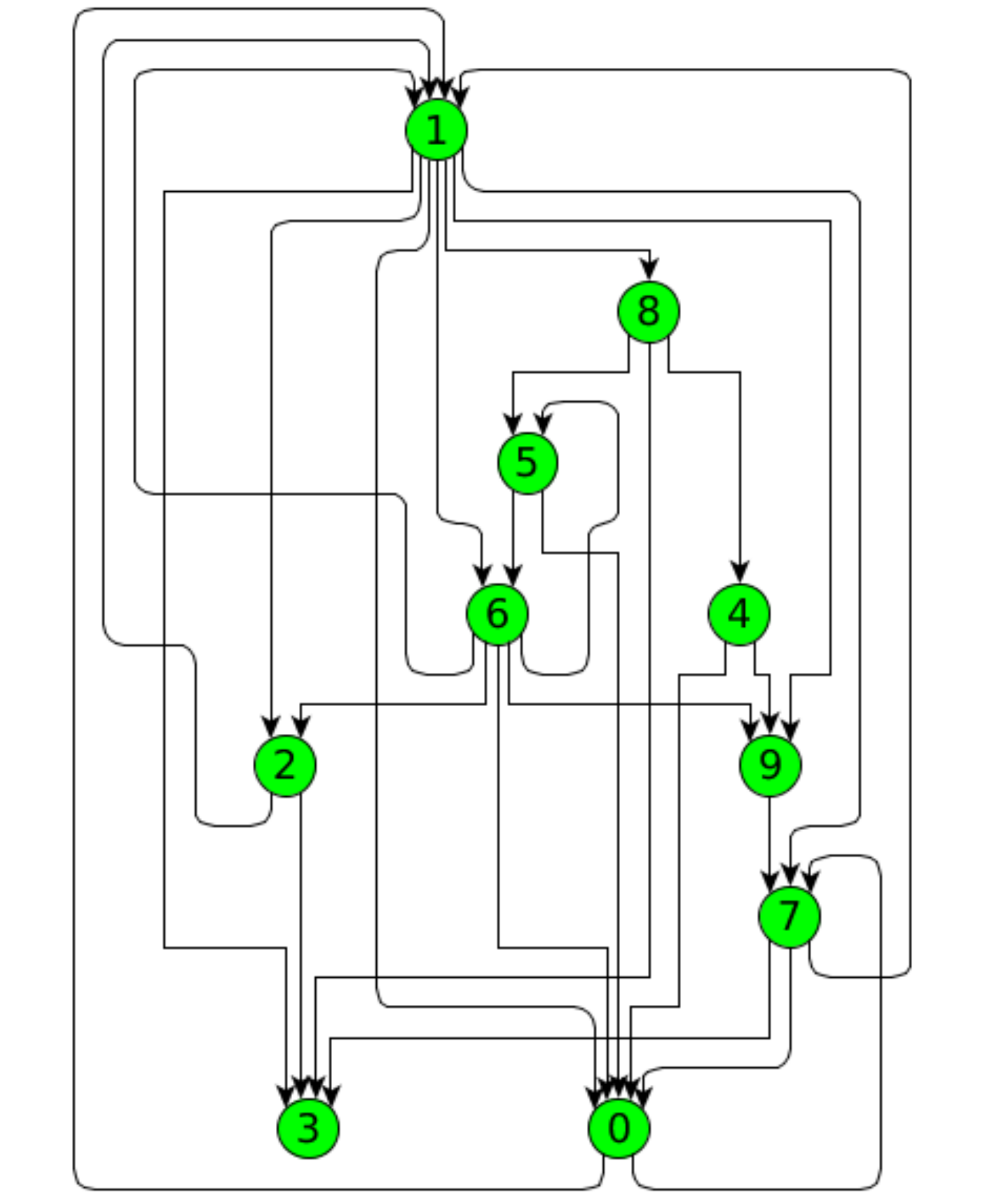} & 
   \includegraphics[width=4cm]{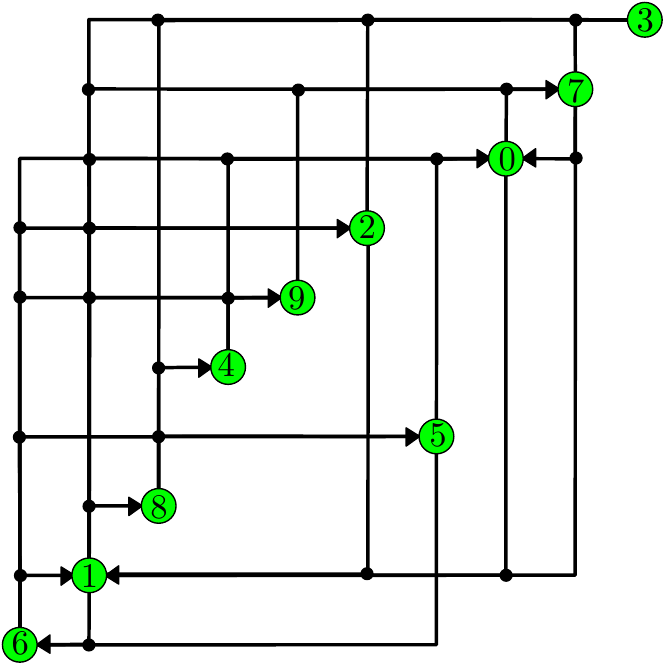} & 
   \includegraphics[width=4cm]{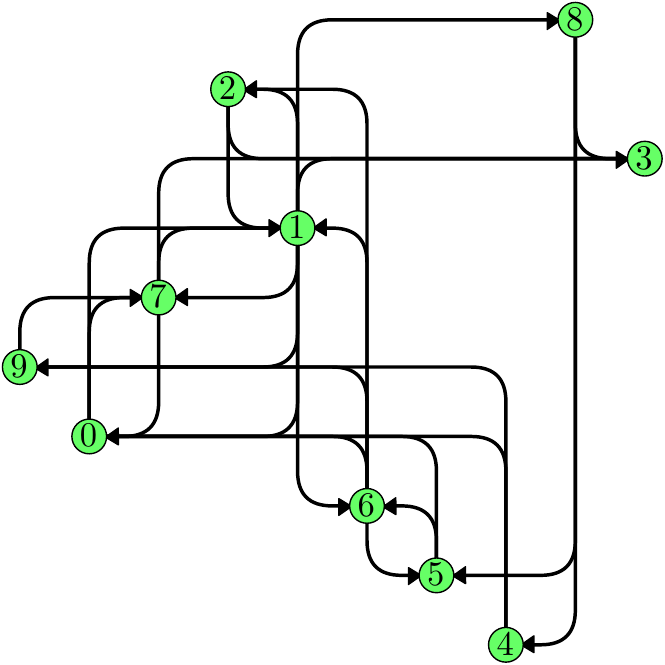} \\
   (a) & (b) & (c)
\end{tabular}
    \caption{(a) A hierarchical drawing with ``backloop routing'' produced by yEd~\cite{yed}, (b) an OOD, and a (c) minimum-ink L-drawing of the same connected random directed graph. 
    }\label{fi:opening-example}
    \end{center}
\end{figure}

Drawing directed graphs\remove{ while conveying their hierarchical structure} is a challenging goal to which a vast literature has been dedicated~\cite{hn-hda-13,stt-mvuhs-81}. 
In fact, most of the theoretical and applicative tasks concerning these graphs turned out to be difficult. To give a few examples, even for planar and acyclic graphs, it is hard to decide whether they admit an upward planar drawing~\cite{gt-ccurp-01}; if a directed graph contains directed cycles, it is hard to reverse the minimum number of edges to make it acyclic~\cite{g--66,hk-gafsp-03}, which is the first step of the renown Sugiyama approach~\cite{stt-mvuhs-81}. From a practical perspective, the more directed cycles it has, the less a hierarchical drawing of it becomes meaningful, strongly reducing the possibility of obtaining a clear and unambiguous representation.

In this paper we introduce a novel drawing paradigm specifically conceived for directed graphs, which combines orthogonal drawings with matrix representations. Na\-me\-ly, we call \emph{L-drawing} a drawing where each vertex has exclusive $x$- and $y$-coordinates and each directed edge has two orthogonal segments, one leaving the source vertically and one entering the destination horizontally. Edges are allowed both to overlap and to intersect. Graphically, the joint between the horizontal and the vertical segment of an edge is drawn as a small circular arc, allowing the user to easily identify the edges even in the presence of overlaps and intersections. We remark that L-drawings are strictly related to the popular {\em confluent drawing} style~\cite{degm-cd-05}, which also leverages partially collinear edges and smoothened bends to reduce the visual complexity of the representation.

An example of L-drawing is in Fig.~\ref{fi:opening-example}(c); further examples can be found in Fig~\ref{fi:examples-2}.

This paradigm is inspired by the \emph{overloaded orthogonal drawings}~\cite{kt-ood-11,kt-davlg-12} of directed acyclic graphs, in which vertices have exclusive $x$- and $y$-coordinates and the edges consist of two segments, one leaving the source from the top and one entering the destination from the left. For graphs that are not acyclic, a minimal set of edges is selected to be drawn backward, leaving the source from the bottom and entering the destination from the right. Edges are hence always drawn with a single bend turning clockwise. As long as the graph has few directed cycles, this model results extremely effective, as also testified by user studies~\cite{dmpt-hvdgu-14}. L-drawings can in fact be seen as a generalization of this model to graphs that may contain many directed cycles, so that edges are allowed both to turn clockwise and counterclockwise. Note that, instead of using small circular arcs, ambiguities are solved in overloaded orthogonal drawings by placing a small dot on each overlapped bend (see Fig.~\ref{fi:opening-example}(b)). 

The relationship of L-drawings with orthogonal drawings is immediate and the benefits are immediate as well, since orthogonal drawings are widely recognized as one of the most readable drawing standards, ensuring a clear readability even in the presence of crossings~\cite{dett-gd-99,hhe-eca-08}.
We remark that a representation very similar to L-drawings was used in~\cite{bk-aesig-97} as an intermediate step to compute orthogonal drawings of high-degree graphs in the Kandinsky model. However, the main purpose of~\cite{bk-aesig-97} was to balance edges on the four sides of each vertex, so to reduce the area of the orthogonal drawing obtained once the vertices are expanded into rectangular boxes.

The relationship with matrix representations, and the benefits deriving from it, are also somehow evident. User studies suggest that matrix representations are extremely well suited for many simple tasks, but their performances dramatically decrease when it is requested to follow paths in the graph~\cite{dmpt-hvdgu-14,gfc-rgunm-05}. This is due to the fact that in this representation each vertex has two labels, one for its row and one for its column. Traversing a directed edge consists of moving along the row of the source vertex until the column of the destination vertex is reached. Traversing a directed path, instead, forces the user to repeatedly jump from the column of the vertex that is reached to the row of the same vertex when it is left. L-drawings overcome this limitation by moving the labels inside the matrix. The matrix itself is symbolically represented by the edges, that identify the portions of the rows and columns that have to be followed to connect adjacent vertices. A previous attempt to combine node-link and matrix representations was presented in~\cite{hfm-ntrix-07}, which introduced the NodeTrix visualization tool.

L-drawings have several strong points: (i) they always exist and are easy to compute; in fact any placement of the vertices such that no two vertices share the same horizontal or vertical grid line yields a valid L-drawing (the placement of the vertices uniquely determines the routing of the edges); (ii) they are not ambiguous, even for very dense graphs; (iii) they are particularly suited for interactive graph drawing, since vertices and edges can be easily added or removed preserving the user's mental map.

Since L-drawings always exist, we are interested in producing readable ones. One of the most desirable features of a graph drawing, especially when the graph is large, is that of having a small size. The classical notion of size of a drawing, namely the area of its bounding box, does not make much sense in this case, due to the requirement of using different $x$- and $y$-coordinates. We hence study the problem of minimizing the \emph{ink} of the drawing, which is computed as the sum of the lengths of vertical and horizontal segments, where overlapping portions are counted only once.

We prove in Section~\ref{se:complexity} that this problem is NP-complete. Motivated by this, we describe in Section~\ref{se:online} an incremental heuristics, based on adding vertices one at a time using the minimum additional ink. This heuristics is experimentally evaluated in Section~\ref{se:experiments} against the optimal solution (when it was possible to compute one), against overloaded orthogonal drawings, and against a random placement of the vertices. We give definitions in Section~\ref{se:preliminaries} and conclude in Section~\ref{se:conclusions} suggesting future lines of research.

\section{Preliminaries}\label{se:preliminaries}

In this paper we consider graphs $G=(V,E)$ that are directed. An edge $(u,v)$ is an \emph{outgoing} edge of $u$ and an \emph{incoming} edge of $v$. We allow $G$ to contain both $(u,v)$ and $(v,u)$, but only a single copy of them; further, we do not allow loops $(u,u)$.  

In an \emph{L-drawing} $\Gamma$ of $G$ each vertex $v \in V$ is assigned an exclusive integer $x$-coordinate $x_v$ and $y$-coordinate $y_v$, and each edge $(u,v)$ is drawn as a 1-bend polyline composed of a vertical segment incident to $u$ and a horizontal segment incident to $v$. Note that, edges may cross and partially overlap. We resolve the ambiguity among crossings and bends by replacing each bend with a small rounded junction (see Fig.~\ref{fi:opening-example}(c)).

The \emph{ink} $ink(\Gamma)$ of an L-drawing $\Gamma$ is the sum of the lengths of vertical and horizontal segments, where overlapping portions are counted only once. Since rounded junctions have all equal size, they are not taken into account when measuring ink.

We are interested in producing L-drawings of minimum cost. Both if the cost is computed in terms of area or in terms of ink, it is immediate that a drawing of minimum cost uses contiguous values for the coordinates of the vertices. Also, since area and ink do not change up to a translation of the whole drawing, in the rest of the paper we assume to use integer $x$- and $y$-coordinates in the range $[1\dots n]$.   
With the above assumptions, given a graph $G=(V,E)$, an L-drawing can be immediately obtained by choosing any two orderings $\pi_x$ and $\pi_y$ for the vertices in $V$, where $\pi_x$ determines $x$-coordinates and $\pi_y$ determines $y$-coordinates. We denote such a drawing by $\Gamma(\pi_x,\pi_y)$, and its ink by $ink(\pi_x,\pi_y)$. For any two orderings $\pi_x$ and $\pi_y$, drawing $\Gamma(\pi_x, \pi_y)$ has area $n \times n$, where $n = |V|$. 
Hence, we focus on the problem of computing L-drawings with minimum ink. The corresponding decision problem is formally defined as follows.

\begin{quote}
{\bf Problem:} 
\begin{minipage}[t]{.90\linewidth} 
{\sc Minimum-Ink-L-Drawing (MILD)}
\end{minipage}\par\vspace{2pt}
{\bf Instance:} 
\begin{minipage}[t]{.90\linewidth} 
A directed graph $G=(V,E)$ and an integer $k$.
\end{minipage}\par\vspace{2pt}
{\bf Question:} 
\begin{minipage}[t]{.90\linewidth} 
Does $G$ admit an L-drawing $\Gamma$ such that $ink(\Gamma) \leq k$?
\end{minipage}
\end{quote}

Let $\Gamma$ be an L-drawing of $G$ and let $ink_x(\Gamma)$ ($ink_y(\Gamma)$, respectively) be the amount of ink used for horizontal (vertical, respectively) segments. Obviously, $ink(\Gamma) = ink_x(\Gamma)+ink_y(\Gamma)$. In the following lemma we prove that $ink_x(\Gamma)$ ($ink_y(\Gamma)$, respectively) only depends on the horizontal (vertical, respectively) permutation of the vertices in $\Gamma$, which makes it possible to search for two optimal permutations \mbox{independently.}

\begin{lemma}\label{le:separazione}
Let $G$ be a graph and let $\pi_x$ be any permutation of its vertices. For any two permutations $\pi'_y$ and $\pi''_y$ we have that $ink_x(\pi_x,\pi'_y) = ink_x(\pi_x,\pi''_y)$. Symmetrically, $ink_y(\pi'_x,\pi_y) = ink_y(\pi''_x,\pi_y)$ for any two permutations $\pi'_x$ and $\pi''_x$. 
\end{lemma}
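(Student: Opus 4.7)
The plan is to prove the first equality (horizontal ink depends only on $\pi_x$); the symmetric statement follows verbatim by swapping roles.

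First I would unpack the definition: for every edge $(u,v)$, the horizontal portion lies on the line $y=y_v$ and spans from $x$-coordinate $x_u$ to $x$-coordinate $x_v$. The key structural observation is that, because the vertices are assigned pairwise distinct $y$-coordinates, the only horizontal segments on the line $y=y_v$ are those belonging to edges incoming to $v$. Consequently, horizontal overlaps can only occur among the horizontal segments of edges sharing a common destination, and $ink_x(\Gamma)$ decomposes as $\sum_{v\in V} L_x(v)$, where $L_x(v)$ is the length of the union $U(v) = \bigcup_{(u,v)\in E}[\min(x_u,x_v),\max(x_u,x_v)]$.

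Next I would observe that each of the intervals forming $U(v)$ has $x_v$ as an endpoint, so the union $U(v)$ is itself a single interval, namely
\[
U(v) \;=\; \Bigl[\min\bigl(x_v,\min_{(u,v)\in E} x_u\bigr),\; \max\bigl(x_v,\max_{(u,v)\in E} x_u\bigr)\Bigr].
\]
Thus $L_x(v) = \max\bigl(x_v,\max_{(u,v)\in E} x_u\bigr) - \min\bigl(x_v,\min_{(u,v)\in E} x_u\bigr)$, a quantity that depends exclusively on the $x$-coordinates of $v$ and of its in-neighbors. Summing over $v$ yields $ink_x(\Gamma) = \sum_{v\in V} L_x(v)$, a function of $\pi_x$ alone. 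Therefore $ink_x(\pi_x,\pi'_y) = ink_x(\pi_x,\pi''_y)$ for any two permutations $\pi'_y,\pi''_y$.

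The only subtle step, and the one I would state most carefully, is the claim that horizontal segments originating from edges with \emph{different} destinations cannot overlap: this uses the exclusivity of $y$-coordinates in an L-drawing and is what rules out any interaction between $L_x(v)$ and $L_x(w)$ for $v\neq w$. Once this is in place, everything else is a direct computation, and the symmetric statement for $ink_y$ is obtained by exchanging the roles of the vertical and horizontal segments and of source and destination.
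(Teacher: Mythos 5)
Your proposal is correct and follows essentially the same route as the paper: both decompose $ink_x$ vertex by vertex using the fact that all horizontal segments on the line $y=y_v$ belong to edges entering $v$ and share the endpoint $x_v$, so their union is a single interval whose length (equivalently, the paper's ``longest segment going East plus longest segment going West'') depends only on the $x$-coordinates. Your explicit justification that segments with different destinations cannot overlap (via exclusivity of $y$-coordinates) is a point the paper leaves implicit, but the argument is the same.
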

\begin{proof}
Each edge $(u,v)$ is composed of two segments, one incident to the source vertex $u$ and one incident to the target vertex $v$. Hence, if we consider for each vertex only the segments incident to it, then all the segments of the drawing are eventually accounted for. Since overlaps are counted only once, $ink(\Gamma)$ is the sum, for every vertex, of the longest segments exiting it in the four directions North, East, South, and West. Thus, $ink_x(\Gamma)$ is the sum, for every vertex, of the longest segments exiting it in the directions East and West, while $ink_y(\Gamma)$ is the sum of the longest segments exiting it along North and South. Hence, $ink_x(\Gamma)$ only depends on $\pi_x$ and $ink_y(\Gamma)$ only depends on $\pi_y$. \qed
\end{proof} 

The \emph{complete graph $K_n$} is the directed graph $G=(V,E)$, where $|V| = n$ and for all ordered pairs $u,v \in V$, $u \neq v$, we have $(u,v) \in E$. In the following lemma we prove that any placement of the vertices of $K_n$ on the $n \times n$ grid yields an L-drawing whose edges use all the segments of such a grid.

\begin{lemma}\label{le:complete-optimal}
Any L-drawing $\Gamma$ of $K_n$ on the $n \times n$ grid uses $2n(n-1)$ ink. 
\end{lemma}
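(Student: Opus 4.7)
The plan is to apply the characterization established inside the proof of Lemma~\ref{le:separazione}: for any L-drawing $\Gamma$, the total ink equals the sum, over all vertices $v$, of the lengths of the longest segments incident to $v$ in each of the four cardinal directions N, S, E, W. Recall that vertical segments at $v$ come from its outgoing edges (they leave $v$ upward or downward toward the $y$-coordinate of the destination), while horizontal segments at $v$ come from its incoming edges (they enter $v$ from the east or from the west, starting at the $x$-coordinate of the source). Because vertices occupy exclusive rows and columns, no two distinct vertices can contribute vertical segments on the same column or horizontal segments on the same row, so these four per-vertex contributions are indeed disjoint and together account for all ink exactly once.

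First I would fix an arbitrary placement of the vertices of $K_n$ on the $n \times n$ grid, so that the $x$- and $y$-coordinates form two permutations of $\{1,\dots,n\}$. For a vertex $v$ at position $(x_v, y_v)$, I would identify the four longest incident segments. Because $K_n$ is complete, the edge $(v,w)$ exists for every $w \neq v$, so the vertical segment at $v$ going up reaches whichever destination maximizes $y_w$ among those above $v$; since every value in $\{1,\dots,n\}$ is used as a $y$-coordinate, the topmost vertex has $y$-coordinate exactly $n$, producing an N-contribution of $n - y_v$ (which is $0$ in the boundary case $y_v = n$). Symmetrically the S-contribution is $y_v - 1$. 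Since $(u,v)$ also exists for every $u \neq v$, the horizontal segment entering $v$ from the east starts at the vertex with $x$-coordinate $n$, giving $n - x_v$, and from the west it starts at $x$-coordinate $1$, giving $x_v - 1$.

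Finally I would sum these per-vertex contributions. Because the $y$-coordinates range exactly over $\{1,\dots,n\}$,
\[
\sum_{v} \bigl[(n - y_v) + (y_v - 1)\bigr] \;=\; \sum_{v} (n-1) \;=\; n(n-1),
\]
and the same computation for $x$ yields another $n(n-1)$, for a total of $2n(n-1)$, independent of the chosen placement.

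The only step that demands mild care is justifying that the \emph{longest} segment in each direction has the claimed length rather than something shorter. This is precisely where completeness of $K_n$ is used: it guarantees that an outgoing (resp.\ incoming) edge is present toward the extremal $y$-rank (resp.\ from the extremal $x$-rank), so the extremum is attained. I do not expect a serious obstacle; the argument is essentially a careful bookkeeping on top of Lemma~\ref{le:separazione}.
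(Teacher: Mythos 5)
Your proof is correct and follows essentially the same route as the paper's: both hinge on completeness guaranteeing edges to and from the extremal rows and columns, so that every grid line of the $n \times n$ grid is fully covered. The paper phrases this per grid line (each column is spanned by the vertical segments of the edges from its unique vertex to the topmost and bottommost vertices, and symmetrically for rows), whereas you do the equivalent per-vertex bookkeeping via the four-direction decomposition of Lemma~\ref{le:separazione}; the arithmetic and the key use of completeness are the same.
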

\begin{proof}
Consider the vertices $v_N$ and $v_S$ on the topmost and bottommost row of $\Gamma$, respectively, and the vertices $v_W$ and $v_E$ on the leftmost and rightmost column of $\Gamma$, respectively; note that $v_N \neq v_S$, $v_W \neq v_E$, and $0 \leq |\{v_N,v_S\} \cap \{v_E,v_W\}| \leq 2$.
For each column $x$, consider the vertex $v \in K_n$ lying in column $x$. Then, the vertical segments of edges $(v,v_N)$ and $(v,v_S)$ span the whole column $x$. Analogously, for each row $y$, consider the vertex $v \in K_n$ lying in row $y$. Then, the horizontal segments of edges $(v_W,v)$ and $(v_E,v)$ span the whole row $y$.
Hence, all the rows and columns of the $n \times n$ grid are spanned by at least one edge, and the statement trivially follows.
\qed
\end{proof}

\begin{figure}[htb]
\begin{center}
\begin{tabular}{c @{\hspace{1em}} c @{\hspace{1em}} c  }
   \includegraphics[width=4.3cm]{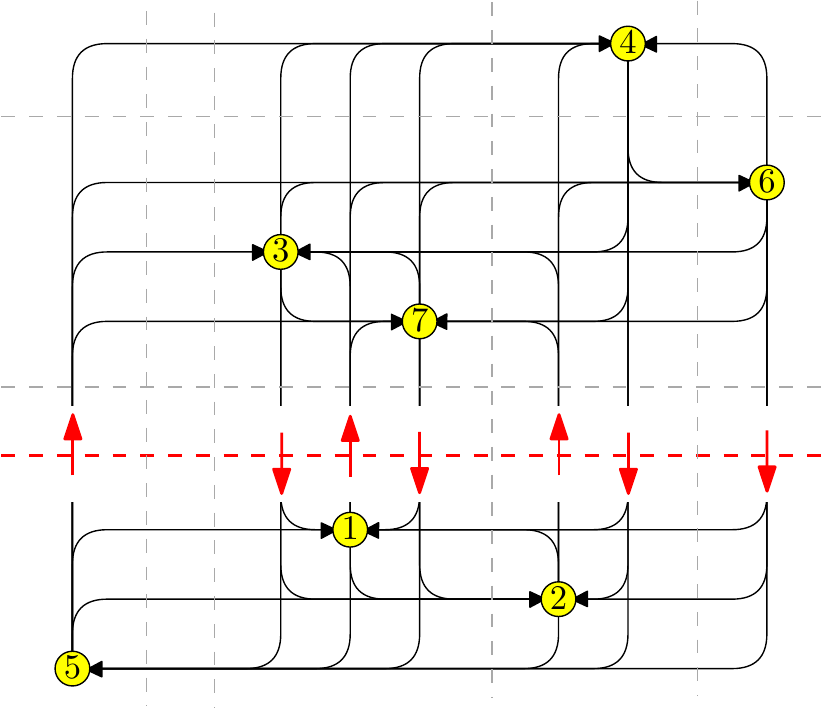} &
   \includegraphics[width=4cm]{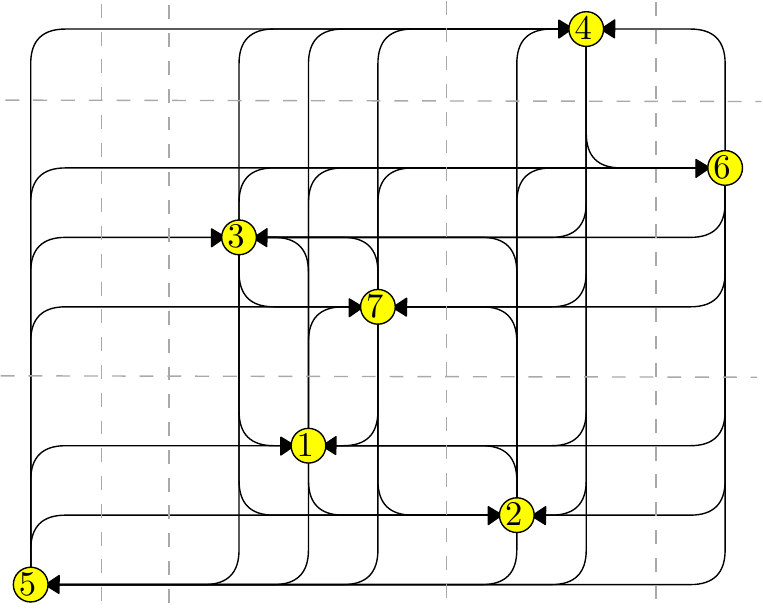} &
   \includegraphics[width=2.4cm]{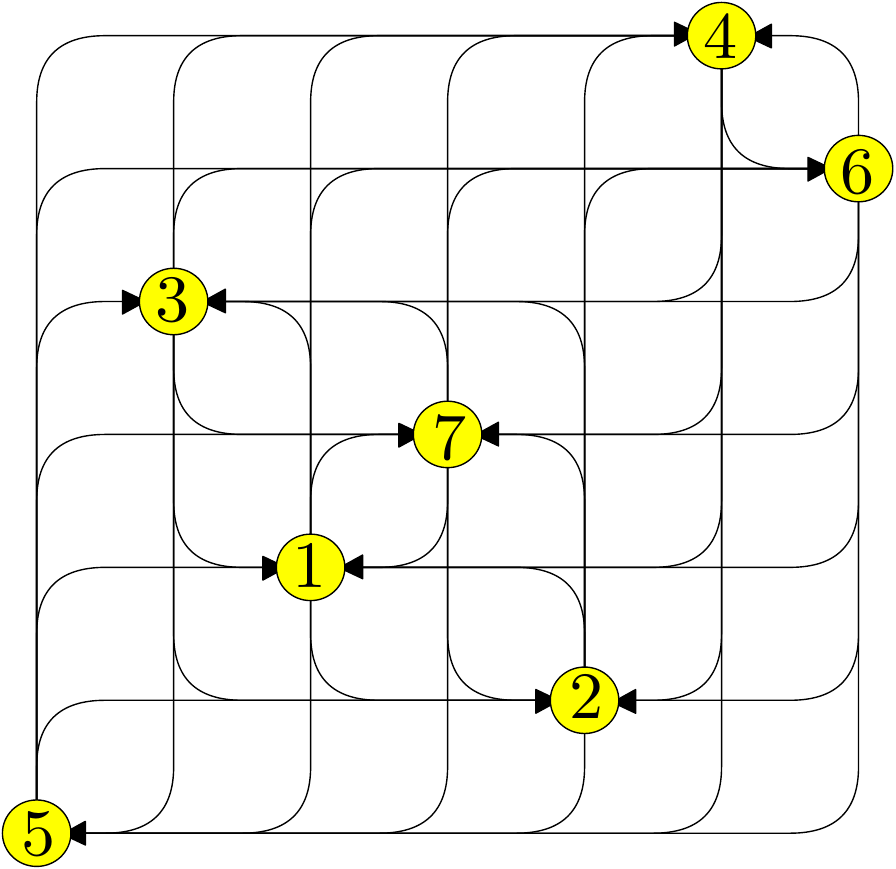} \\
   (a) & (b) & (c)\\
\end{tabular}
    \caption{(a) An L-drawing of $K_7$ on the $11 \times 10$ grid. (b) The L-drawing in (a) after the removal of a horizontal grid line. (c) A minimum-ink L-drawing of $K_7$.}\label{fi:k7}
    \end{center}
\end{figure}

Clearly, Lemma~\ref{le:complete-optimal} implies that any L-drawing of $K_n$ on the $n \times n$ grid is a minimum-ink drawing, since empty rows or columns never reduce the ink. However, when a complete graph $K_n$ is a subgraph of a larger graph, it might make sense to spread its vertices on a larger grid. In Lemma~\ref{le:complete-non-optimal} we hence study the cost, in terms of ink, of this operation.

\begin{lemma}\label{le:complete-non-optimal}
Any L-drawing of $K_n$ on the $(n+h) \times (n+k)$ grid uses $2n (n-1) + n(h+k)$ ink.
\end{lemma}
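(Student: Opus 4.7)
The plan is to mirror the argument of Lemma~\ref{le:complete-optimal}, promoting it from a fixed $n \times n$ grid to the larger $(n+h) \times (n+k)$ bounding box. Interpreting ``on the $(n+h) \times (n+k)$ grid'' to mean that the bounding box of the vertices has these dimensions (all four extreme rows/columns contain a vertex), I name $v_N, v_S$ the vertices on the topmost and bottommost occupied rows, and $v_W, v_E$ those on the leftmost and rightmost occupied columns, so that $y_{v_N} - y_{v_S} = n+k-1$ and $x_{v_E} - x_{v_W} = n+h-1$.

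Next I use the per-vertex accounting developed in the proof of Lemma~\ref{le:separazione}: the total ink is the sum, over all vertices $v$, of the longest segments exiting $v$ in the four compass directions. Because $G = K_n$, every $v$ is adjacent (as source) to both $v_N$ and $v_S$, so the vertical segments of the edges $(v,v_N)$ and $(v,v_S)$ reach from $y_v$ up to $y_{v_N}$ and down to $y_{v_S}$, contributing exactly $y_{v_N}-y_{v_S} = n+k-1$ to the vertical ink in the column of $v$ (even if $v$ coincides with $v_N$ or $v_S$, where one of the two summands simply vanishes). Symmetrically, the edges $(v_W,v)$ and $(v_E,v)$ contribute $n+h-1$ to the horizontal ink in the row of $v$.

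Summing over the $n$ vertices and observing that their columns are pairwise disjoint (so there is no double-counting of vertical ink across different vertices), and likewise for rows, I obtain $ink_y(\Gamma) = n(n+k-1)$ and $ink_x(\Gamma) = n(n+h-1)$. Adding these yields $n(n+k-1) + n(n+h-1) = 2n(n-1) + n(h+k)$, as claimed.

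The only delicate point is justifying that the $n+k-1$ (resp.\ $n+h-1$) figure is both a lower and an upper bound on the vertical (resp.\ horizontal) span seen from each vertex: the lower bound holds because $v_N$ and $v_S$ are actually reached by outgoing edges of every $v$, and the upper bound holds because no edge endpoint lies outside the bounding box of the vertex set. Once this is spelled out, the rest is arithmetic, so I expect no real obstacle beyond being precise about what ``on the $(n+h) \times (n+k)$ grid'' means.
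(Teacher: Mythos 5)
Your proof is correct, but it takes a genuinely different route from the paper's. The paper argues by compression: it picks an empty horizontal grid line $l$ (which must have vertices on both sides, given that the bounding box fills the grid), observes that exactly $n$ vertical segments cross $l$ in $n$ distinct columns, so deleting $l$ saves exactly $n$ ink, and after $h+k$ such deletions reduces to the $n\times n$ case of Lemma~\ref{le:complete-optimal}. You instead compute the ink directly via the per-vertex (really per-occupied-line) accounting of Lemma~\ref{le:separazione}: each of the $n$ occupied columns is spanned end-to-end by the vertical segments of the edges to $v_N$ and $v_S$ and by nothing beyond the bounding box, each of the $n$ occupied rows likewise, empty lines carry no ink along their own axis, and the totals $n(n+k-1)+n(n+h-1)$ give the claim. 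Your version is more self-contained --- it subsumes Lemma~\ref{le:complete-optimal} as the case $h=k=0$ rather than invoking it --- and you are right to make explicit the interpretation that the bounding box coincides with the stated grid, without which the statement is false; the paper leaves this implicit. What the paper's compression argument buys is a local accounting of where the surplus $n(h+k)$ comes from (each empty grid line costs exactly $n$), which is the form reused in the reduction of Lemma~\ref{le:equivalence}. A cosmetic point only: you take $n+h$ to be the number of columns whereas the paper's proof treats it as the number of rows; since the formula is symmetric in $h$ and $k$ this is immaterial.
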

\begin{proof}
Let $\Gamma$ be an L-drawing of $K_n$ on the $(n+h) \times (n+k)$ grid. If $h>0$ consider any horizontal grid line $l$ not intersecting any vertex of $K_n$ and such that at least one vertex is above $l$ and at least one vertex is below $l$. For example, Fig.~\ref{fi:k7}(a) shows a drawing of $K_7$ on the $11 \times 10$ grid and a possible grid line $l$ in red. Denote by $p$ be the number of vertices above $l$ ($n-p$ is the number of vertices below $l$). Line $l$ is traversed by $p$ vertical segments of $\Gamma$ exiting the $p$ vertices above $l$ and entering the region below $l$. Also, $l$ is traversed by $n-p$ vertical segments exiting the $n-p$ vertices below $l$ and entering the region above $l$. Since vertices have exclusive $x$-coordinates, these $p+(n-p)=n$ vertical segments use distinct vertical grid lines; thus, removing line $l$ yields an L-drawing $\Gamma'$ on the $(n+h-1) \times (n+k)$ grid that saves $n$ ink (see Fig.~\ref{fi:k7}(b)).
Analogous compressions can be performed starting from vertical grid lines that do not intersect any vertex.
After $h+k$ compressions we produce an L-drawing of $K_n$ of minimum size (see Fig.~\ref{fi:k7}(c)) which, by Lemma~\ref{le:complete-optimal}, uses $2n(n-1)$ ink. It follows that the ink of the original drawing is  $2n(n-1) + n(h+k)$, hence the statement.\qed
\end{proof}

\section{Complexity of the MILD Problem}\label{se:complexity}

In order to show the NP-hardness of \textsc{MILD} we reduce the problem \textsc{Profile}, which is defined as follows.

\begin{quote}
{\bf Problem:} 
\begin{minipage}[t]{.90\linewidth} 
{\sc Profile}
\end{minipage}\par\vspace{2pt}
{\bf Instance:} 
\begin{minipage}[t]{.90\linewidth} 
A graph $G=(V,E)$ and an integer $k$.
\end{minipage}\par\vspace{2pt}
{\bf Question:} 
\begin{minipage}[t]{.90\linewidth} 
Does there exist an ordering $\pi$ for the vertices of $V$ such that 
\begin{equation}\label{eq:profile}
\sum\limits_{u \in V} \left( \pi(u)- \min\limits_{v \in N(u) \cup u}\pi(v)\right) \leq k\end{equation}
\end{minipage}
\end{quote}

It is folklore\footnotemark[1]\footnotetext[1]{Refer to~\cite{dps-sglp-02}. A formal proof of the equivalence of the two problems can be found in~\cite{gf-tvsnp-98}.} that \textsc{Profile} is equivalent to the NP-complete problem \textsc{SumCut}~\cite{dgpt-mp-91,g-tvsng-97,ly-pmpmg-94}, which is formally defined as follows.

\begin{quote}
{\bf Problem:} 
\begin{minipage}[t]{.90\linewidth} 
{\sc SumCut}
\end{minipage}\par\vspace{2pt}
{\bf Instance:} 
\begin{minipage}[t]{.90\linewidth} 
A graph $G=(V,E)$ and an integer $k$.
\end{minipage}\par\vspace{2pt}
{\bf Question:} 
\begin{minipage}[t]{.90\linewidth} 
Given an ordering $\pi$ for the vertices of $V$, denote by $\delta(i,\pi)$ the cardinality of $\{u: \exists (u,v) \in E: \pi(u) \leq i < \pi(v)\}$. Does there exist an ordering $\pi$ such that 
\begin{equation}\label{eq:sumcut}\sum\limits_{i = 1}^n \delta(i,\pi) \leq k\end{equation}
\end{minipage}
\end{quote}

Given an instance $I_p = \langle G=(V,E), k\rangle$ of \textsc{Profile}, we build an equivalent instance $I_m = \langle G'=(V',E'),k' \rangle$ of \textsc{MILD} as follows. Graph $G'$ contains two subgraphs $K^1$ and $K^2$, that are complete graphs on $p=\frac{5}{2}n^2 + \frac{9}{2}n + 1$ vertices, where $n=|V|$. Consider two arbitrary vertices $v_1$ and $v_2$ of $K^1$ and $K^2$, respectively. For each vertex $v \in V$ we add to $V'$ a vertex $u_v$ with (directed) edges $(u_v,v_1)$, $(u_v,v_2)$, and $(v_2,u_v)$. For each edge $e =(v,w) \in E$ we add to $E'$ edges $(u_v,u_w)$ and $(u_w,u_v)$. We set $k' = k + 4p(p-1) + \frac{3}{2}n^2 + \frac{9}{2}n$.

\begin{figure}[htb]
\begin{center}
   \includegraphics[page=1,width=7cm]{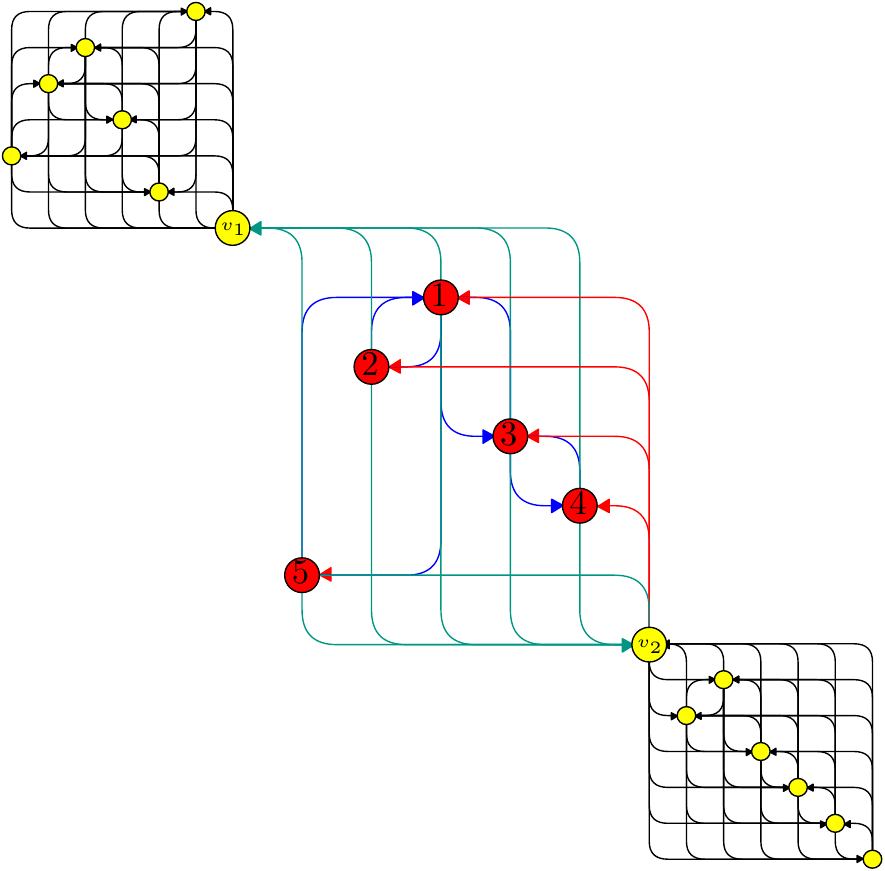}
    \caption{Instance $I_m$ of \textsc{Profile} with $p=7$ ($K^1$ and $K^2$ are drawn smaller for space reasons).}\label{fi:reduction}
    \end{center}
\end{figure}

\begin{lemma}\label{le:equivalence}
Instance $I_p$ admits a solution if and only if instance $I_m$ does.
\end{lemma}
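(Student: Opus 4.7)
The plan is to prove both directions by decomposing the ink of any L-drawing of $G'$ into a rigid baseline forced by $K^1$ and $K^2$ plus a flexible term whose minimum matches the Profile cost of the chosen ordering of $V$. The choice $p=\frac{5}{2}n^2+\frac{9}{2}n+1$ will be justified by the observation that even a single extra row or column in the bounding box of $K^1$ or $K^2$ costs $p$ units of ink (Lemma~\ref{le:complete-non-optimal}), which will dwarf any possible savings elsewhere.

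For the ($\Rightarrow$) direction I would start from an ordering $\pi$ of $V$ of Profile cost at most $k$ and build $\Gamma$ explicitly: place $K^1$ on rows and columns $[1,p]$, $K^2$ on rows and columns $[p+n+1,2p+n]$, every $u_v$ at row and column $p+\pi(v)$, with $v_1$ at position $(p,p)$ and $v_2$ at $(p+n+1,p+n+1)$. To compute $ink(\Gamma)$ I would sum the four directional extents at each vertex: Lemma~\ref{le:complete-optimal} yields $2p(p-1)$ from each of $K^1$ and $K^2$; the extended extents at $v_1$'s row and at $v_2$'s row and column, reaching into the $u_v$ band, add a constant; the vertical extent at each $u_v$'s column collapses to $y_{v_2}-y_{v_1}=n+1$ because the edges $(u_v,v_1)$ and $(u_v,v_2)$ dominate, totalling $n(n+1)$ over all $v$; and the horizontal extent at each $u_v$'s row splits into an E-extent dominated by $v_2$ (summing to $n(n+1)/2$) plus a W-extent equal to $\pi(v)-\min_{w\in N[v]}\pi(w)$, contributed by the edges $(u_w,u_v)$ with $(w,v)\in E$ and summing to $\mathrm{Profile}(\pi)$. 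Collecting the pieces gives $ink(\Gamma)=4p(p-1)+\frac{3}{2}n^2+\frac{9}{2}n+\mathrm{Profile}(\pi)\le k'$.

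For the ($\Leftarrow$) direction I would start from any $\Gamma$ with $ink(\Gamma)\le k'$ and first invoke Lemma~\ref{le:complete-non-optimal} to force $K^1$ and $K^2$ to occupy compact $p\times p$ subgrids: assuming the trivial bound $\mathrm{Profile}\le n(n-1)$ (else $I_p$ is a trivial yes-instance), the available budget beyond $4p(p-1)$ is strictly less than $p$, so no bounding box can be stretched by one row or column. A second pass---exploiting the symmetry of L-drawings under horizontal and vertical reflection, then ruling out ``stacked'' configurations in which the $u_v$'s lie entirely above or below both of $K^1$ and $K^2$ (those inflate the $u_v$-column vertical ink by $\Omega(np)$)---reduces to $K^1$ top-left, $K^2$ bottom-right, and the $u_v$'s in the middle band. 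In this configuration, repeating the decomposition of the first direction yields $ink(\Gamma)\ge 4p(p-1)+\frac{3}{2}n^2+\frac{9}{2}n+\mathrm{Profile}(\pi)$, where $\pi(v)=x_{u_v}-p$, so $\pi$ solves $I_p$. The main obstacle will be this second pass: verifying that no alternative compact placement of $K^1,K^2$ or of $v_1,v_2$ inside them can do better, so that the Profile term is inevitable. The essential ingredient is the asymmetry of the reduction---$v_1$ only receives edges from the $u_v$'s while $v_2$ both sends and receives---which ensures that the W-extent at each $u_v$'s row is the only ordering-dependent quantity, the other three directions being pinned down by $v_1$ and $v_2$.
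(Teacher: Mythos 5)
Your proposal is correct and follows essentially the same route as the paper: the identical explicit construction (corner placement of $v_1$ and $v_2$, the $u_v$'s in the middle band ordered by $\pi$), the same per-vertex directional-extent accounting yielding $4p(p-1)+\tfrac{3}{2}n^2+\tfrac{9}{2}n+\mathrm{Profile}(\pi)$, and the same budget argument via Lemma~\ref{le:complete-non-optimal} to force compact cliques in the converse direction. If anything, you are more explicit than the paper about why stacked configurations and non-corner placements of $v_1,v_2$ can be excluded --- steps the paper merely asserts.
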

\begin{proof}
Suppose instance $I_p= \langle G=(V,E), k\rangle$ of \textsc{Profile} admits a solution and let $\pi$ be the ordering of the vertices of $V$ such that $\sum_{u \in V} (\pi(u)- \min_{v \in N(u) \cup u}\pi(v)) \leq k$. We show that the corresponding instance $I_m= \langle G'=(V',E'),k' \rangle$ of \textsc{MILD} admits a solution. We draw $K^1$ and $K^2$ in such a way that each uses contiguous $x$- and $y$-coordinates and the bounding box of $K^1$ is above and on the left of the bounding box of $K^2$. In particular, we place $v_1$ in the bottom right corner of the bounding box of $K^1$ and $v_2$ in the top left corner of the bounding box of $K^2$. We insert between $K^1$ and $K^2$ the remaining part of the vertices in $V'$ so that their horizontal ordering corresponds to $\pi$ and their vertical ordering is arbitrary. See Fig.~\ref{fi:reduction} for an example.
We show that the ink is less than $k' = k + 4p(p-1) + \frac{3}{2}n^2 + \frac{9}{2}n$. In fact, the ink can be computed as a sum of: (i) the ink used inside the complete subgraphs $K^1$ and $K^2$ (black edges of Fig.~\ref{fi:reduction}), which by Lemma~\ref{le:complete-optimal} is $4p(p-1)$ in total; (ii) the ink used to connect for each $u \in V$ vertex $n_u$ to $v_1$ and $v_2$ (drawn in green in Fig.~\ref{fi:reduction}), which is $2n + (n+1)n$; (iii) the ink of the edges (drawn red in Fig.~\ref{fi:reduction}) that connect $v_2$ to $n_u$, for each $u \in V$, which is $n + \sum_{i=1}^n i$; (iv) the ink used for the edges among vertices $n_u$, with $u \in V$. The vertical ink of the latter contribution is already computed in (ii). The horizontal ink is exactly $\sum_{u \in V} (\pi(u)- \min_{v \in N(u) \cup u}\pi(v))$. Summing up the contributions (i)--(iii) we have $4p(p-1) + \frac{3}{2}n^2 + \frac{9}{2}n$. Since  $\sum_{u \in V} (\pi(u)- \min_{v \in N(u) \cup u}\pi(v)) \leq k$ the used ink is at most $k' = k + 4p(p-1) + \frac{3}{2}n^2 + \frac{9}{2}n$.
   
Conversely, suppose that instance $I_m$ admits an L-drawing using at most $k'$ ink. 
By Lemmas~\ref{le:complete-optimal} and~\ref{le:complete-non-optimal}, any L-drawing of $K^1$ or $K^2$ that does not use contiguous $x$- and $y$-coordinates uses at least $p = \frac{5}{2}n^2 + \frac{9}{2}n + 1$ ink more than an L-drawing that uses contiguous $x$- and $y$-coordinates. Observe that the value on the left side of equation~\ref{eq:profile} is bounded by $n^2$, where $n = |V|$. Hence, we can assume $k \leq n^2$ in any non-trivial \textsc{Profile} instance. It follows that the additional ink that would be needed to insert grid lines in the drawings of $K^1$ or $K^2$ is at least $p = \frac{5}{2}n^2 + \frac{9}{2}n + 1 > k + \frac{3}{2}n^2 + \frac{9}{2}n$. This ensures that in any L-drawing that uses at most $k'$ ink, $K^1$ and $K^2$ use contiguous $x$- and $y$-coordinates, and vertices $n_u$, for each $u \in V$, are inserted between the bounding boxes of $K^1$ and $K^2$, both in the horizontal and in the vertical order. Hence, by Lemma~\ref{le:complete-optimal}, the total contribution of these two subgraphs is $4p(p-1)$. 
Also, $v_1$ lies on the bottom-right corner of $K_1$ and $v_2$ on the top-left corner of $K_2$, as they are the only vertices of $K_1$ and $K_2$ that are connected to vertices $n_u$, with $u \in V$.
This implies that, for every horizontal and vertical order of vertices $n_u$, the cost of the green edges in Fig.~\ref{fi:reduction} is $2n + (n+1)n$, and the cost of the red edges is $n + \sum_{i=1}^n i$. Finally, for the blue edges, the vertical contribution is already covered by the green edges and the horizontal contribution is less of equal $k$. Hence, the horizontal order of vertices $n_u$ yields a solution for \textsc{Profile}.
This concludes the proof. \qed
\end{proof}

\begin{theorem}\label{th:np-complete}
\textsc{MILD} is NP-complete.
\end{theorem}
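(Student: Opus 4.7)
The plan is straightforward since almost all the technical work has already been carried out in Lemma~\ref{le:equivalence}: what remains is to stitch together membership in NP, polynomial-time computability of the reduction, and the known hardness of \textsc{Profile}.

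First I would argue membership in NP. A natural certificate is a pair of permutations $(\pi_x,\pi_y)$ of $V$: given these, the drawing $\Gamma(\pi_x,\pi_y)$ is uniquely determined, and by Lemma~\ref{le:separazione} its ink decomposes as $ink_x(\pi_x)+ink_y(\pi_y)$. Each of these two quantities can be computed in polynomial time by summing, for each vertex $v$, the maximum horizontal (respectively vertical) displacement to its neighbors along the appropriate direction — this is exactly the accounting used in the proof of Lemma~\ref{le:separazione}. Thus one can verify $ink(\Gamma)\le k$ in polynomial time.

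Next I would handle NP-hardness via the reduction already described in the paragraph preceding Lemma~\ref{le:equivalence}. The two key observations are: (i) \textsc{Profile} is NP-complete, which is justified in the excerpt through its equivalence with \textsc{SumCut}; and (ii) the reduction from $I_p=\langle G=(V,E),k\rangle$ to $I_m=\langle G'=(V',E'),k'\rangle$ is polynomial. For (ii), note that $|V'|=2p+n=O(n^2)$ since $p=\tfrac{5}{2}n^2+\tfrac{9}{2}n+1$, $|E'|$ is bounded by $|V'|^2$, and the threshold $k'=k+4p(p-1)+\tfrac{3}{2}n^2+\tfrac{9}{2}n$ is a polynomially-bounded integer in the size of $I_p$, and all of $G'$ and $k'$ can be written down in time polynomial in $n$. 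Together with Lemma~\ref{le:equivalence}, which establishes that $I_p$ is a yes-instance of \textsc{Profile} iff $I_m$ is a yes-instance of \textsc{MILD}, this gives a polynomial-time many-one reduction from the NP-complete problem \textsc{Profile} to \textsc{MILD}, proving NP-hardness.

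Combining NP-membership and NP-hardness yields the theorem. There is essentially no obstacle left here — the only thing to double-check is that the constants $p$, $k'$, and the sizes of $K^1$ and $K^2$ are indeed polynomial in $n$, which is immediate from their explicit expressions; the conceptually delicate part (that no admissible drawing can afford to spread $K^1$ or $K^2$ over a non-contiguous coordinate range) has already been absorbed into Lemma~\ref{le:equivalence}.
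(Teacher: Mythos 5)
Your proposal is correct and follows essentially the same route as the paper's own proof: NP-membership via guessing/verifying a pair of permutations (with ink computable in polynomial time), and NP-hardness via the polynomial-time reduction from \textsc{Profile} whose correctness is Lemma~\ref{le:equivalence}. The only difference is that you spell out the size bounds on $p$, $k'$, and $G'$ explicitly, which the paper leaves implicit.
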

\begin{proof}
\textsc{MILD} is trivially in NP by non-deterministically trying all permutations $\pi_x$ and $\pi_y$ of the vertices of the graph and computing the ink of $\Gamma(\pi_x,\pi_y)$. Given an instance $I_p$ of \textsc{Profile}, the corresponding instance $I_m$ of \textsc{MILD} can be built in polynomial time, and Lemma~\ref{le:equivalence} ensures that the two instances are equivalent.\qed 
\end{proof}


\remove{
\section{A Polynomial-time Algorithm for Caterpillars}\label{se:caterpillars}

In this section.....

First, place the vertices of the spine on the diagonal, in the order in which they appear along the spine.

Then, for each vertex $v$ of the spine, consider its two neighbors $u$ and $w$ along the spine. 

Consider all the leaves $v_1, \dots, v_k$ such that there exists an edge $(v_i,v)$ in $T$. 

Suppose that at least one of edges $(u,v)$ and $(w,v)$ exists, say the former. Then, insert $k$ columns between $u$ and $v$, $k/2$ rows between $u$ and $v$, and $k/2$ rows between $v$ and $w$. Finally, arbitrarily assign these rows and columns to vertices $v_1, \dots, v_k$.

Suppose now that none of those edges exists. Then, let $\omega(u)$ be the total number of vertices in the subtree of $T$ rooted at $u$, let $\omega(w)$ be the total number of vertices in the subtree of $T$ rooted at $w$, and let $\omega(v) = \min(\omega(u),\omega(w))$.
If $k \leq \omega(v)$, then insert $k$ columns between $v$ and $w$, $k/2$ rows between $u$ and $v$, and $k/2$ rows between $v$ and $w$, and arbitrarily assign these rows and columns to vertices $v_1, \dots, v_k$.
If $k > \omega(v)$, suppose $\omega(u)< \omega(w)$. Insert $k/2$ rows between $u$ and $v$ and $k/2$ rows between $v$ and $w$, insert $k$ columns immediately to the left of the first vertex of the spine, and arbitrarily... If  $\omega(u)> \omega(w)$. Insert $k/2$ rows between $u$ and $v$ and $k/2$ rows between $v$ and $w$, insert $k$ columns immediately to the right of the last vertex of the spine, and arbitrarily...

We prove that the resulting drawing of $T$ has minimum ink.
Consider any minimum-ink drawing of $T$. We prove that this drawing can be transformed into our drawing without increasing the total ink..........

} 

\section{A Polynomial On-line Algorithm}\label{se:online}

Motivated by the NP-completeness result in Theorem~\ref{th:np-complete}, we seek in this section for an efficient heuristics to construct L-drawings of graphs with reduced ink. In particular, we study the setting in which the drawing is constructed incrementally by adding one vertex at a time to a previously computed drawing; the goal is then to add the new vertex (with all its incident edges) using the minimum additional ink, where the only operation that is allowed on the previous drawing is to insert a row and a column (the cost of elongating the edges traversing the inserted row/column has hence to be taken into account, as well). We prove in Theorem~\ref{th:incremental-optimal} that there exists a polynomial-time algorithm, called \texttt{OptAddVertex}, to place the given vertex in the given L-drawing while minimizing the additional ink of the resulting L-drawing with respect to the given one.

We remark that, besides providing a heuristics for the general problem, this incremental approach fits in the framework of streamed graph drawing, in which the graph to be drawn is too large to be stored in the memory and hence comes in the form of a streaming of its elements (vertices, edges, components) that have to be placed in the drawing without a prior knowledge of the elements that are yet to come.

Since, by Lemma~\ref{le:separazione}, the horizontal and vertical coordinates of an L-drawing can be computed independently, we describe Algorithm \texttt{OptAddVertex} by only focusing on how to compute the optimal $x$-coordinate of the new vertex, adding a column.

Let $G=(V,E)$ be an $n$-vertex directed graph and let $\Gamma$ be an L-drawing of it. We assume that the vertices in $V$ have $x$-coordinates in $\{1,2,\dots,n\}$. Vertex $v$ has to be added to the drawing, with its (possibly empty) set of outgoing edges $\{(v,u_1)$, $(v,u_2)$, \dots, $(v,u_h)\}$ towards vertices of $V$ and its (possibly empty) set of incoming edges $\{(w_1,v)$, $(w_2,v)$, \dots, $(w_k,v)\}$ from vertices of $V$. 

Algorithm \texttt{OptAddVertex} computes the additional ink needed to insert a vertical grid line $l_v$ for $v$ in each one of the possible $n+1$ positions $\{1,2, \dots,n+1\}$, where if $l_v$ is inserted in position $i$, all vertices of $V$ with $x$-coordinate greater or equal than $i$ have to be shifted one unit to the right (hence, $i=1$ and $i=n+1$ correspond to adding a column to the left and to the right of the drawing, respectively). 

We define three integer functions, that we call $StretchInk_x$, $IncomingInk_x$, and $OutgoingInk_x$, in the domain $\{1,2, \dots,n+1\}$ as follows. $StretchInk_x(i)$ is the cost of inserting $l_v$ in position $i$. This cost is due to the fact that the length of all horizontal segments traversed by $l_v$ is incremented by one. $IncomingInk_x(i)$ is the cost, in terms of horizontal ink, of routing the edges $\{(w_1,v)$, $(w_2,v)$, \dots, $(w_k,v)\}$ entering $v$ when $v$ is placed in position $i$. Observe that all these edges will enter $v$ on a horizontal grid line $l_h$, which is exclusive of $v$. Hence, the value of $IncomingInk_x(i)$ is the range of the $x$-coordinates of vertices $\{w_1,, w_2, \dots, w_k\} \cup \{v\}$ after the insertion of $v$ in position $i$. 
The computation of function $OutgoingInk_x$ is more complex. Each outgoing edge $(v,u_j)$, $j=1,\dots,h$, of $v$ has a vertical segment (which does not contribute to function $OutgoingInk_x$) and an horizontal segment entering $u_j$ at its $y$-coordinate $y_{u_j}$. However, $u_j$ may have already horizontal segments entering it at $y$-coordinate $y_{u_j}$. Let $W_j$ and $E_j$ be the minimum and the maximum $x$-coordinate that are used by some horizontal segments at coordinate $y=y_{u_j}$ (if there is no horizontal segment with $y=y_{u_j}$ we set $W_j=E_j=x_{u_j}$). The contribution of edge $(v,u_j)$ to $OutgoingInk_x(i)$ is zero if $W_j \leq i \leq E_j$ and $\min(|i-W_j|,|i-E_j|)$, otherwise.   

Finally, we insert $v$ in a position corresponding to a minimum of function $AddInk_x$ defined as $AddInk_x = StretchInk_x + IncomingInk_x + OutgoingInk_x$.

The heuristic \texttt{IncrementaLDraw} for producing L-drawings of directed graphs works as follows. First, we order the vertices of the graph in such a way that, for any $1 \leq j \leq n$, the subgraph induced by the first $j$ vertices is connected. In particular, we consider the vertices in a BFS order. Second, we assign to the first vertex coordinates $(1,1)$ and add a vertex at a time in the given order using Algorithm \texttt{OptAddVertex}.

We say that a permutation $\pi_1$ of the first $n$ positive integers \emph{extends} a permutation $\pi_2$ of the first $n-1$ positive integers if $\pi_2$ can be obtained from $\pi_1$ by removing element~$n$. 

\begin{theorem}\label{th:incremental-optimal}
Given a directed graph $G$, a vertex $v \in G$, and an L-drawing $\Gamma'(\pi'_x,\pi'_y)$ of the subgraph $G' = G \setminus v$, algorithm \texttt{OptAddVertex} constructs in linear time an L-drawing $\Gamma^*(\pi^*_x,\pi^*_y)$ of minimum ink among all L-drawings $\Gamma(\pi_x,\pi_y)$ of $G$ such that $\pi_x$ extends $\pi'_x$ and $\pi_y$ extends $\pi'_y$.
\end{theorem}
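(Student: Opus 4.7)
The plan is to leverage Lemma~\ref{le:separazione} to decouple horizontal and vertical inks, and then argue that $AddInk_x$ (and symmetrically $AddInk_y$) exactly measures the additional horizontal (resp.\ vertical) ink produced by inserting $v$ at any candidate position. Since \texttt{OptAddVertex} selects a position $i^*$ minimizing $AddInk_x$, picking $i^*$ for the $x$-coordinate of $v$ and the analogous $j^*$ for its $y$-coordinate will yield the sought minimum-ink drawing $\Gamma^*(\pi^*_x,\pi^*_y)$ extending $(\pi'_x,\pi'_y)$.

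First I would fix an arbitrary candidate position $i \in \{1,\dots,n+1\}$ and perform a row-by-row accounting of the post-insertion horizontal ink. As observed in the proof of Lemma~\ref{le:separazione}, the horizontal ink at a row $y$ is the span of the $x$-coordinates of the unique vertex $u$ lying in that row together with its in-neighbors. Three cases arise: (a) rows of $\Gamma'$ whose vertex is not a target of an outgoing edge of $v$ --- the new ink equals the old ink, incremented by $1$ exactly when the new grid line $l_v$ pierces the existing horizontal span of the row; (b) rows of $\Gamma'$ whose vertex is some $u_j$ --- the new ink equals the old ink, plus the same stretching contribution as in (a), plus the outward extension needed to reach column $i$, which is exactly $\max(0,W_j-i)+\max(0,i-E_j)$ (equivalent to the paper's case definition); and (c) the brand-new row $y_v$ --- whose ink is the span of $\{i\}\cup\{x_{w_1},\dots,x_{w_k}\}$ in post-insertion coordinates, i.e.\ $IncomingInk_x(i)$. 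Summing over all rows, the old ink cancels and the net increase is exactly $StretchInk_x(i)+IncomingInk_x(i)+OutgoingInk_x(i)=AddInk_x(i)$. The main technical obstacle to clear is case (b): one must verify that at a shared row $y_{u_j}$ the stretch contribution and the outward-extension contribution do not double count, which follows by expanding the post-insertion span $[\min(W_j,i),\max(E_j,i)]$ as (stretched old span)$\,+\,$(outward extension from $v$) and observing that these two summands are precisely what $StretchInk_x$ and $OutgoingInk_x$ attribute to this row.

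With correctness established, a minimum of $AddInk_x$ over the $n+1$ candidate positions can be located in $O(n+h+k)$ time. Indeed, $IncomingInk_x(i)$ is evaluated in $O(1)$ per $i$ once $\min_k x_{w_k}$ and $\max_k x_{w_k}$ have been pre-computed; $StretchInk_x$ is tabulated on the whole domain by a standard difference-array sweep in which each existing horizontal segment with old span $[a,b]$ contributes $+1$ at position $a+1$ and $-1$ at position $b+1$, followed by a single prefix sum; and $OutgoingInk_x$ is a sum of ``V-shaped'' piecewise-linear functions, one per outgoing edge of $v$, that can be accumulated in amortized $O(1)$ per edge through a second-order difference array plus two prefix sums. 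Applying the symmetric procedure to the $y$-axis computes $AddInk_y$ and the optimal row $j^*$ within the same asymptotic cost.

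Finally, since $ink(\Gamma)=ink_x(\Gamma)+ink_y(\Gamma)$ and Lemma~\ref{le:separazione} guarantees that $ink_x$ depends solely on $\pi_x$ and $ink_y$ solely on $\pi_y$, any extension $(\pi_x,\pi_y)$ of $(\pi'_x,\pi'_y)$ with $v$ inserted at positions $(i,j)$ satisfies $ink(\Gamma(\pi_x,\pi_y))-ink(\Gamma')=AddInk_x(i)+AddInk_y(j)$. Because the two summands are optimized independently by \texttt{OptAddVertex}, the returned drawing $\Gamma^*$ attains the minimum ink among all L-drawings extending $(\pi'_x,\pi'_y)$, which establishes the theorem.
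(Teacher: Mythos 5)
Your proposal is correct and follows essentially the same route as the paper: decouple the axes via Lemma~\ref{le:separazione}, show that $AddInk_x$ is exactly the increment in horizontal ink at each candidate position, and minimize each axis independently with a linear-time sweep. The only difference is cosmetic --- the paper wraps the optimality claim in a short contradiction argument that simply \emph{asserts} the identity $ink(\Gamma)=ink(\Gamma')+AddInk_x(x_v)$, whereas your row-by-row accounting (including the check that the stretch and outward-extension contributions at a row $y_{u_j}$ do not double count) actually justifies that identity, so your write-up is, if anything, more complete on the correctness side.
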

\begin{proof}
Suppose by contradiction that there exists an L-drawing $\Gamma^\circ(\pi^\circ_x,\pi^\circ_y)$ that uses less ink than $\Gamma^*(\pi^*_x,\pi^*_y)$ and such that $\pi^\circ_x$ extends $\pi'_x$ and $\pi^\circ_y$ extends $\pi'_y$. Without loss of generality suppose that $ink_x(\Gamma^\circ) < ink_x(\Gamma^*)$. By removing $v$ we obtain again $\Gamma'(\pi'_x,\pi'_y)$ and we save $AddInk_x(x^\circ_v)$ ink, where $x^\circ_v$ is the $x$-coordinate of $v$ in $\Gamma^\circ$. Since $ink(\Gamma^*) = ink(\Gamma')+AddInk_x(x^*_v)$, where $(x^*_v)$ is the $x$-coordinate of $v$ in $\Gamma^*$, we have that $AddInk_x(x^\circ_v)< AddInk_x(x^*_v)$, contradicting the hypothesis that $\Gamma^*$ is obtained by inserting $v$ in a minimum of function $AddInk_x$. 

We now show how to produce a linear-time implementation of Algorithm \linebreak \texttt{OptAddVertex}. 
We give some hint about how functions $StretchInk_x$, $Inco\-mingInk_x$, and $Out\-going\-Ink_x$ can be computed in linear time.

Maintain a data structure that contains, for each vertex $v_i$, $i=1, \dots, n$, the $x$-coordinates of its leftmost bend $W_i$ and rightmost bend $E_i$. For the computation of $StretchInk_x$ proceed as follows: (i) produce a single ordering of the $W_i$ and $E_i$ based on their $x$-coordinates (a bucket sort would take linear time); (ii) examine the $E_i$ and $W_i$ in order. Let $x$ be the $x$-coordinate of the current element. If the current element is the first considered with coordinate $x$ then initialize $StretchInk_x(x)=StretchInk_x(x-1)$ (where it is assumed $StretchInk_x(0)=0$). If the current element is a $W_i$, add $1$ to $StretchInk_x(x)$, otherwise subtract one from $StretchInk_x(x)$.
  
Function $IncomingInk_x$ is easy to compute in linear time by following the description in Section~\ref{se:online}. 

A linear-time implementation of Function $OutgoingInk_x$ is more complex, but can be done with a similar approach to that of $StretchInk_x$. Namely, let $v$ be the current inserted vertex and let $u_j$, $j=1,\dots, h$, be the vertices that have an incoming edge $(v,u_j)$. For each $j=1,\dots,h$ we have to compute the contribution to function $OutgoingInk_x$, which is zero if $x$ falls in between $W_j$ and $E_j$, and increases linearly with the distance of $x$ from the nearest between $W_j$ and $E_j$. Based on their $x$-coordinates, separately order the $W_j$ and the $E_j$, $j=1, \dots, h$. Now, with a sweep from left to right of the $E_j$'s add to $OutgoingInk_x$ the contribution of vertices $u_j$ when $x > E_j$ and with a second sweep from right to left of the $W_j$'s add to $OutgoingInk_x$ the contribution of vertices $u_j$ when $x < W_j$.\qed
\end{proof}

\subsection{Running Times of Algorithm \texttt{IncrementaLDraw}}\label{apx:A4}

Our JavaScript na\"{\i}ve implementation of Algorithm \texttt{IncrementaLDraw} is not optimized for efficiency and has an $O(n^3)$ time-complexity because our implementation of Algorithm \texttt{OptAddVertex} has $O(n^2)$ time-complexity. Figs.~\ref{fi:times-a} and~\ref{fi:times-b} shows the average times and standard deviations over $100$ runs of \texttt{IncrementaLDraw} on the second graph suite (each run uses a different ordering of the vertices obtained by starting the BFS from a random vertex and by shuffling the adjacency lists of the vertices). In particular, Fig.~\ref{fi:times-a} shows the curves for the different densities of the edges ($10\%$, $20\%$, $30\%$, and $70\%$ of the maximum possible number of edges), while Fig.~\ref{fi:times-b} shows the averages and standard deviations over all densities.

\remove{
\begin{figure}[htb]
\begin{center}
\begin{tabular}{c  c  }
   \includegraphics[width=6cm]{figures/timing_01_density} & 
   \includegraphics[width=6cm]{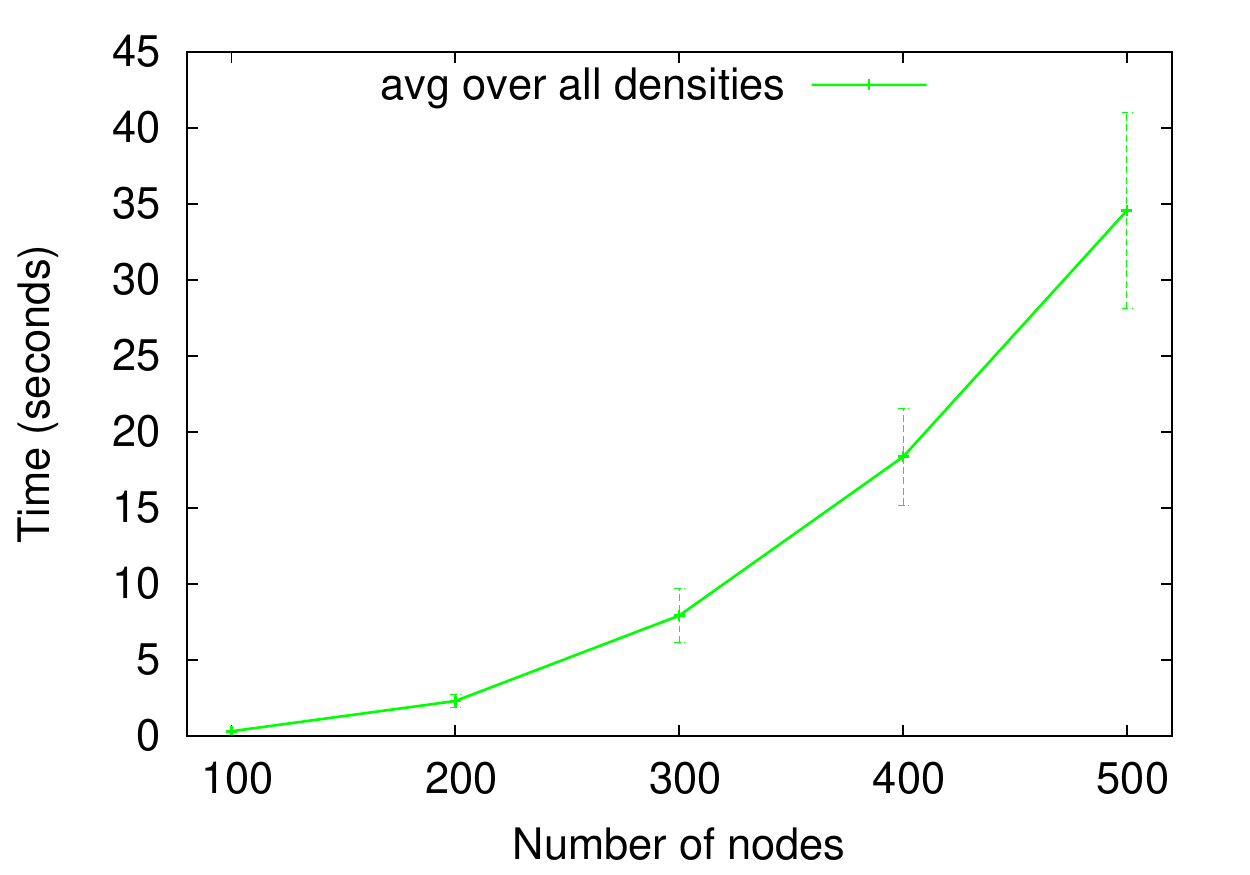} \\
   (a) & (b) \\
\end{tabular}
    \caption{Running times for Algorithm \texttt{IncrementaLDraw} on the graphs of the second graph suite. (a) Average times and standard deviations for different densities of the edges. (b) Average times and standard deviations over all the second graph suite.}\label{fi:times}
    \end{center}
\end{figure}
} 

\begin{figure}[tb!]
\centering
   \subfigure[]{\includegraphics[width=8cm]{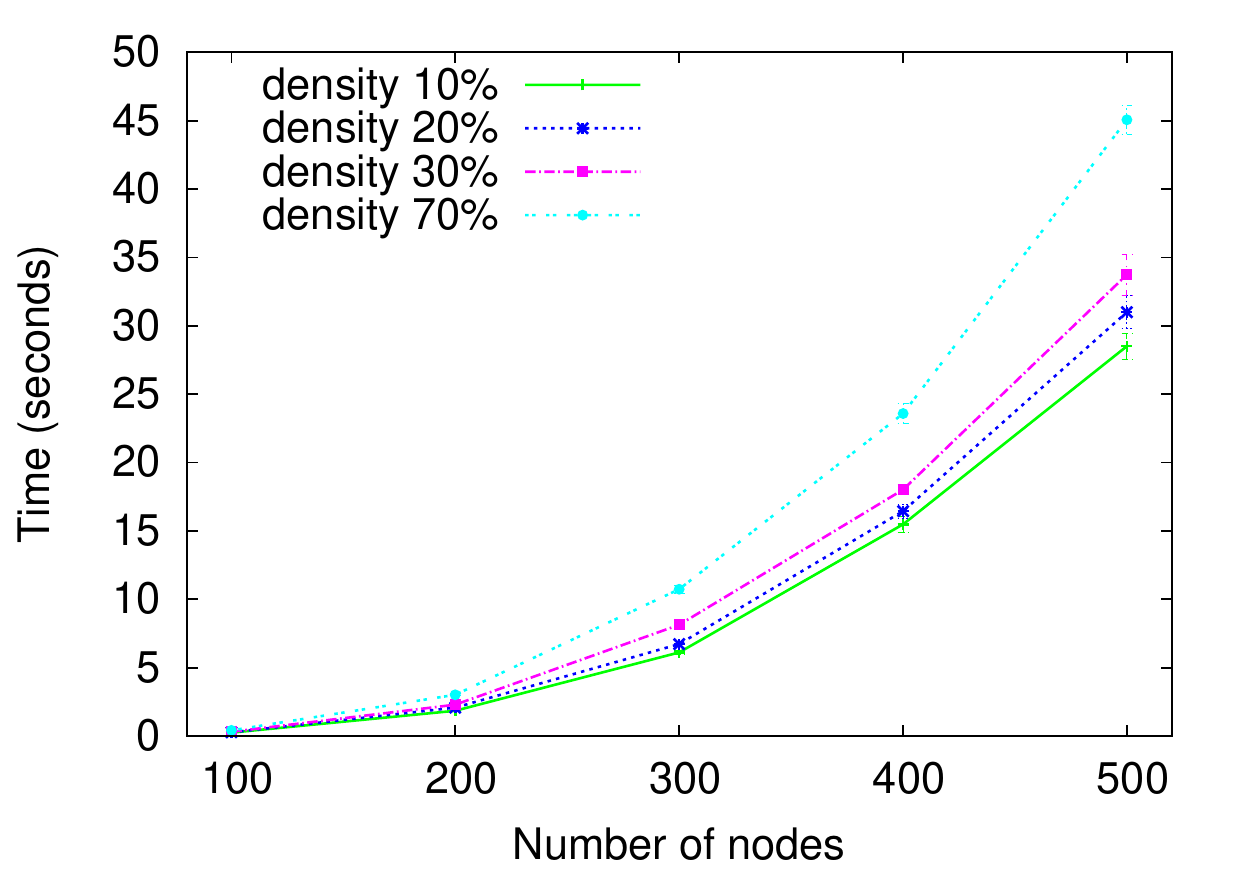}\label{fi:times-a}}
   \subfigure[]{\includegraphics[width=8cm]{figures/timing_all_densities}\label{fi:times-b}}
    \caption{(a) Running times for Algorithm \texttt{IncrementaLDraw} on the graphs of the second graph suite. Average times and standard deviations for different densities of the edges. (b) Average running times and standard deviations for Algorithm \texttt{IncrementaLDraw} over all the graphs of the second graph suite.
    }
\end{figure}


\section{Experimental Evaluation}\label{se:experiments}

We implemented Algorithm \texttt{OptAddVertex} and the heuristics \texttt{IncrementaLDraw}, and performed an extensive testing to evaluate the quality of the obtained L-drawings. We compared the performances of our heuristics with the optimum ink, the \texttt{OOD} algorithm of DAGView~\cite{kt-davlg-12}, and random placements.

\subsection{An ILP Formulation}

In order to compare the heuristic approach with the optimal solution we formulated the problem of finding an L-drawing with minimum ink as an ILP problem. 
Figs.~\ref{fi:examples-2} provides some examples of L-drawings computed by Algorithm \texttt{Incre\-men\-taLDraw}. The drawing produced by Algorithm \texttt{IncrementaLDraw} is compared with the minimum ink drawing obtained via the IPL formulation.

\begin{figure}[tb]
\begin{center}
\begin{tabular}{c @{\hspace{3em}} c  }
   \includegraphics[width=5.5cm]{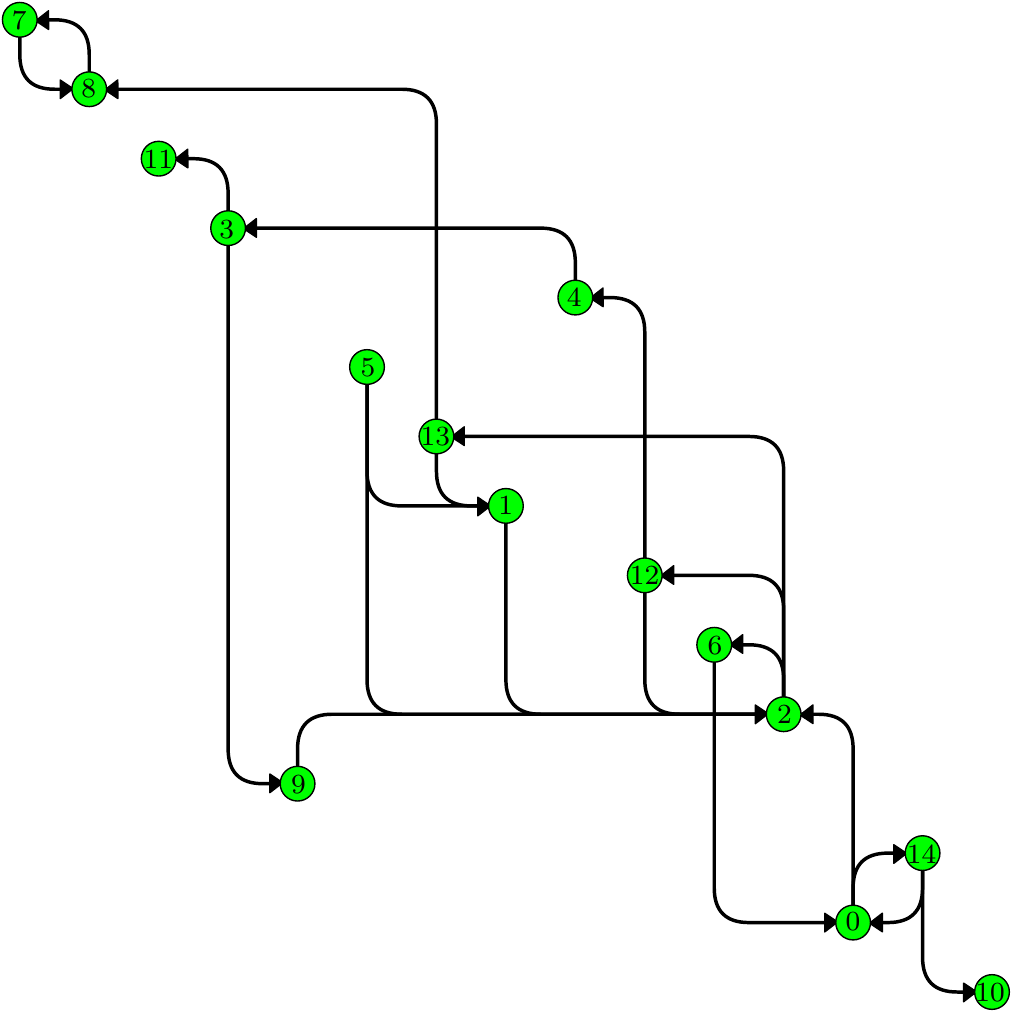} & 
   \includegraphics[width=5.5cm]{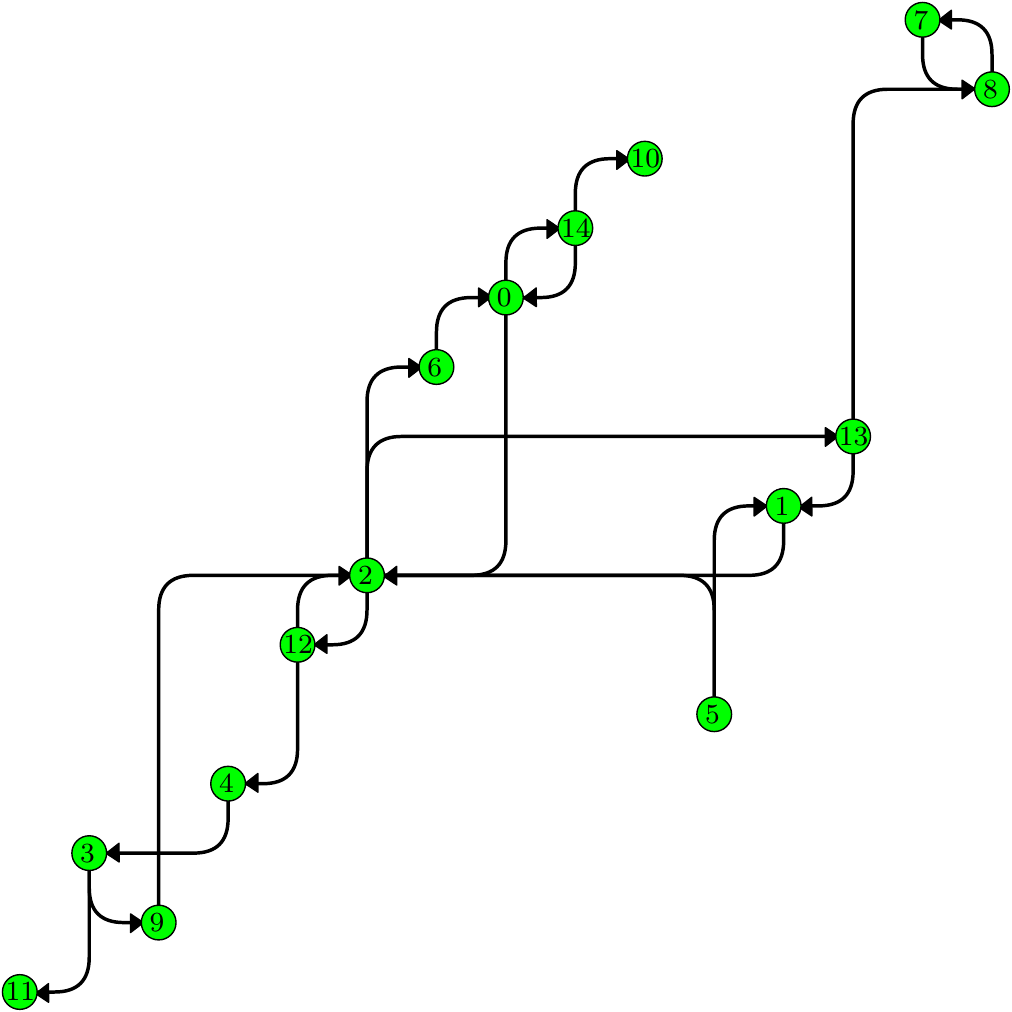} \\
   (a) & (b) \\
\end{tabular}
    \caption{(a) An L-drawing of a random graph with $15$ vertices and $21$ edges ($10\%$ of the maximum possible) drawn with Algorithm~\texttt{IncrementaLDraw}. Ink consumption is $84$. (b) The minimum ink L-drawing of the same graph. Ink consumption is $66$.}\label{fi:examples-2}
    \end{center}
\end{figure}

\remove{
\begin{figure}[tb]
\begin{center}
   \includegraphics[width=8cm]{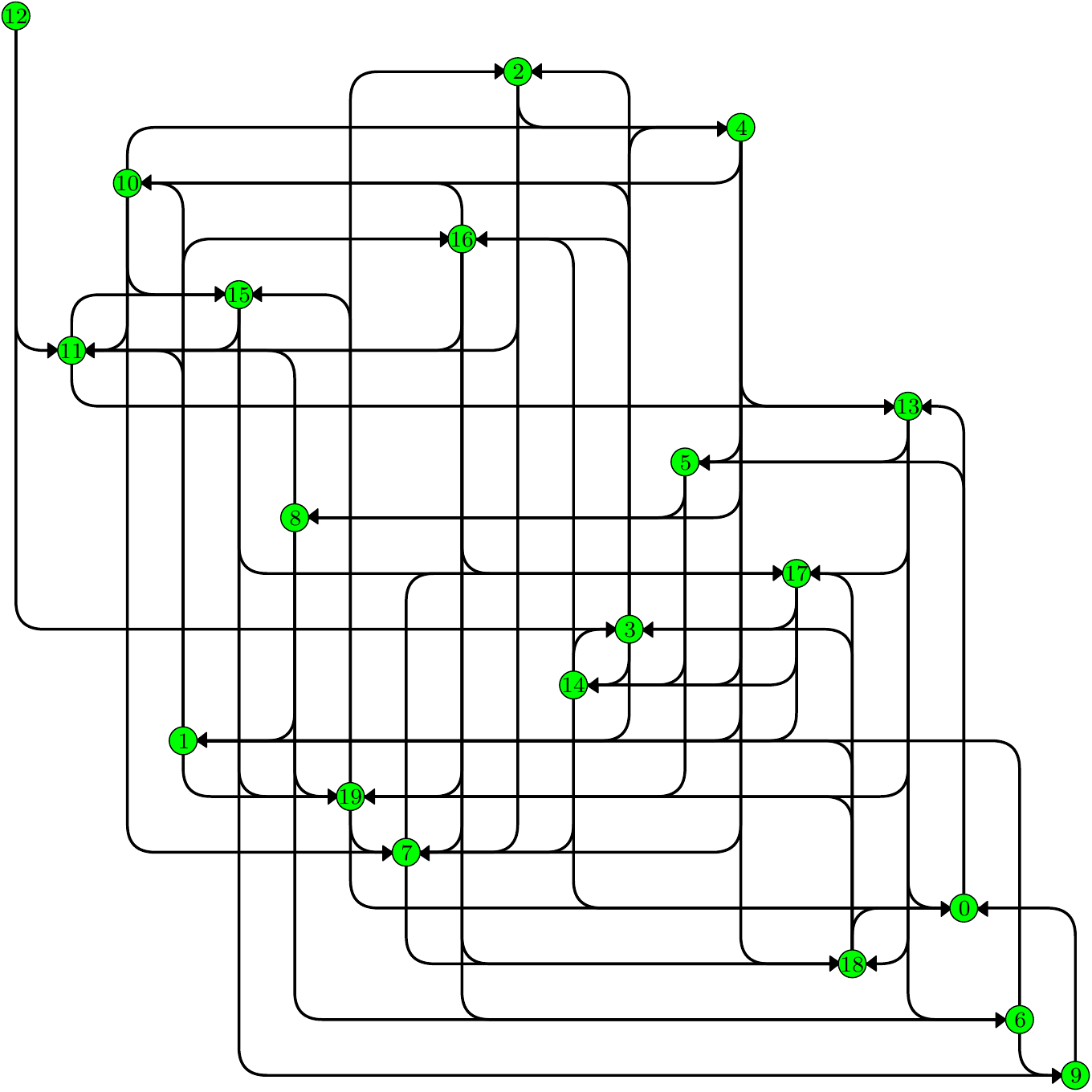}
    \caption{An L-drawing of a random graph with $20$ vertices and $76$ edges ($20\%$ of the maximum possible) drawn with Algorithm \texttt{IncrementaLDraw}.}\label{fi:examples-3}
    \end{center}
\end{figure}
}

Given an n-vertex graph $G=(V,E)$, in the following we describe only the part to compute its $x$-coordinates (the computation of $y$-coordinates is analogous). By definition, the amount $ink_x$ of a drawing $\Gamma$ of $G$ is obtained by summing up all the horizontal segments of the drawing. Since each $y$-coordinate is exclusively used for one vertex, there are $n$ (possibly null, if there exists a vertex with no incoming edges) horizontal segments in $\Gamma$. The horizontal segment $s_i$ that includes $v_i$, $i=1,\dots,n$ extends from the leftmost to the rightmost bends of the edges entering $v_i$. We call $W_i$ and $E_i$ the $x$-coordinates of the endpoints of $s_i$.    
Variables:
{\small
\begin{eqnarray*}
\forall i,j =1,\dots,n: & x_{ij}= \begin{cases}
1\mbox{ if vertex $v_i$ has $x$-coordinate $j$},\\
0\mbox{ otherwise}
\end{cases}\\
\forall i=1,\dots, n:& E_i, W_i \mbox{~~~~~~~(rightmost and leftmost endpoints of $s_i$)}\\
\end{eqnarray*}
}%
Variables $x_{ij}$ are binary, while $E_i$, $W_i$ are integers.  
To simplify the description of the constraints we denote by $x_i$ the $x$-coordinate of vertex $v_i$, that is $x_i = \sum_{j=1}^{n} x_{i,j}\cdot j$. 
{Constraints: }
{\small
\begin{eqnarray*}
\forall i,\  \sum_{j=1}^n x_{ij}=1 & \mbox{(each vertex has a unique $x$-coordinate)}\\
\forall j,\  \sum_{i=1}^n x_{ij}\le1 & \mbox{(each column contains at most one vertex)}\\
\forall i,\ E_i \ge x_i & \mbox{(the rightmost endpoint of $s_i$ does not lie to the left of $v_i$)}\\
\forall i,\ W_i \le x_i & \mbox{(the leftmost endpoint of $s_i$ does not lie to the right of $v_i$)}\\
\forall (v_i,v_j) \in E, \ E_j \ge x_i & \mbox{~~~(the rightmost endpoint of $s_j$ does not lie to the right of $v_i$)}\\
\forall (v_i,v_j) \in E, \ W_j \le x_i & \mbox{(the leftmost endpoint of $s_j$ does not lie to the right of $v_i$)}
\end{eqnarray*}
} 
The objective function is: $\min \sum_{i=1}^n (E_i - W_i)$. 

To compute minimum-ink L-drawings we used Gurobi Optimizer ver. 6.0.4~\cite{gurobi} on a\remove{ Dell PowerEdge 2900III} Dual Xeon X5460 Quad Core 3.16GHz 48GB RAM.

%
%
\subsection{Random Generation of the Graphs Suites}

We generated uniformly at random two graph suites of dense, weakly connected, directed graphs. The first graph suite is meant to compare the performances of Algorithm \texttt{IncrementaLDraw} with respect to the optimum. For each number of vertices $n$ in $\{5, 10, 15\}$ and for each percentage $p$ in $\{10, 20, 30, 70\}$ we generated ten graphs whose number of edges $m$ is $p\%$ with respect to the maximum possible number of edges, that is, $m$ = $\lfloor n(n-1)p/100 \rfloor$. In particular, we used the procedure \texttt{gnm\_random\_graph} of the NetworkX 1.7 library~\cite{networkx}, discarding graphs that were not connected. 

The second graph suite is meant to compare \texttt{IncrementaLDraw} with a random placement of the vertices and is generated with the same procedure and edge percentages of the first suite, but vertices range in $\{100,200,300,400,500\}$.


\begin{figure}[htb]
\begin{center}
\begin{tabular}{c  c  }
   \includegraphics[width=6cm]{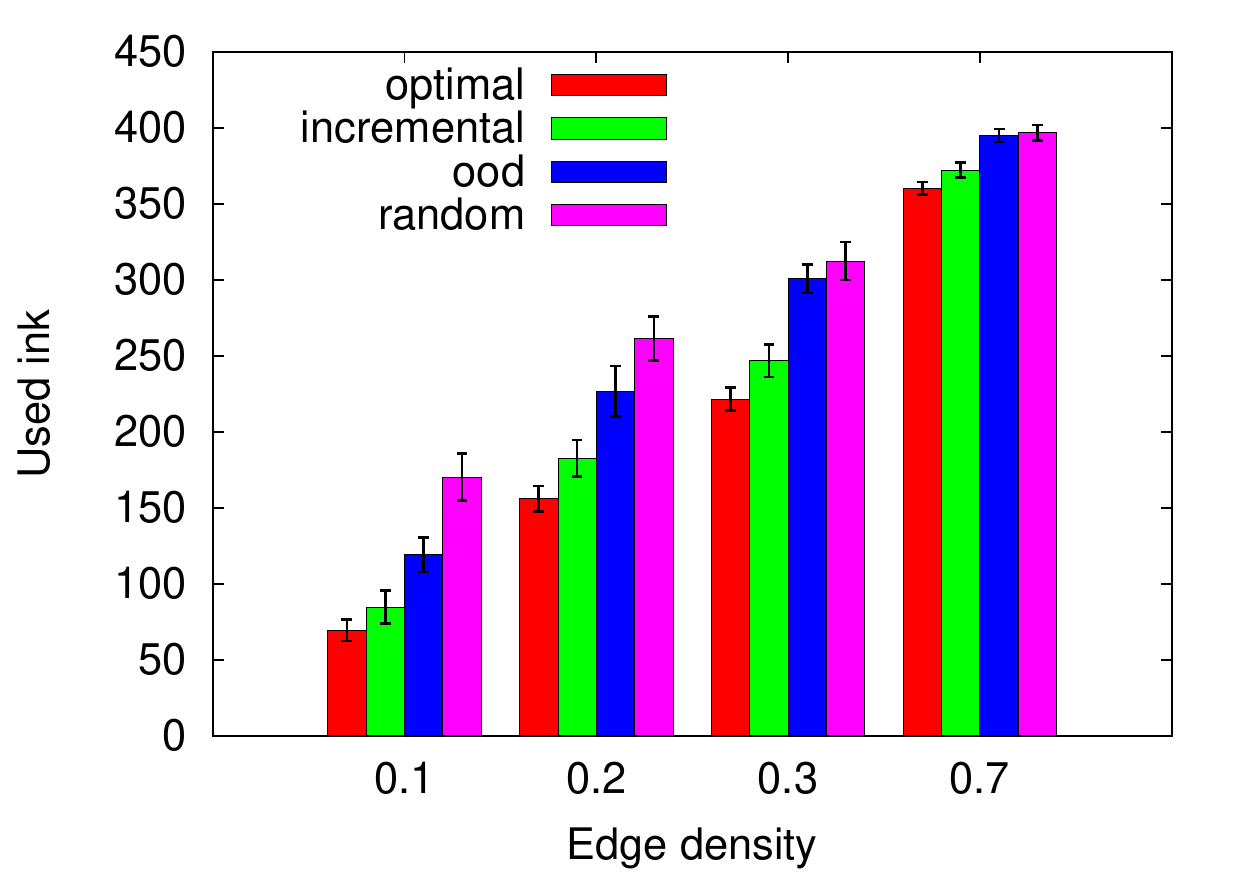} & 
   \includegraphics[width=6cm]{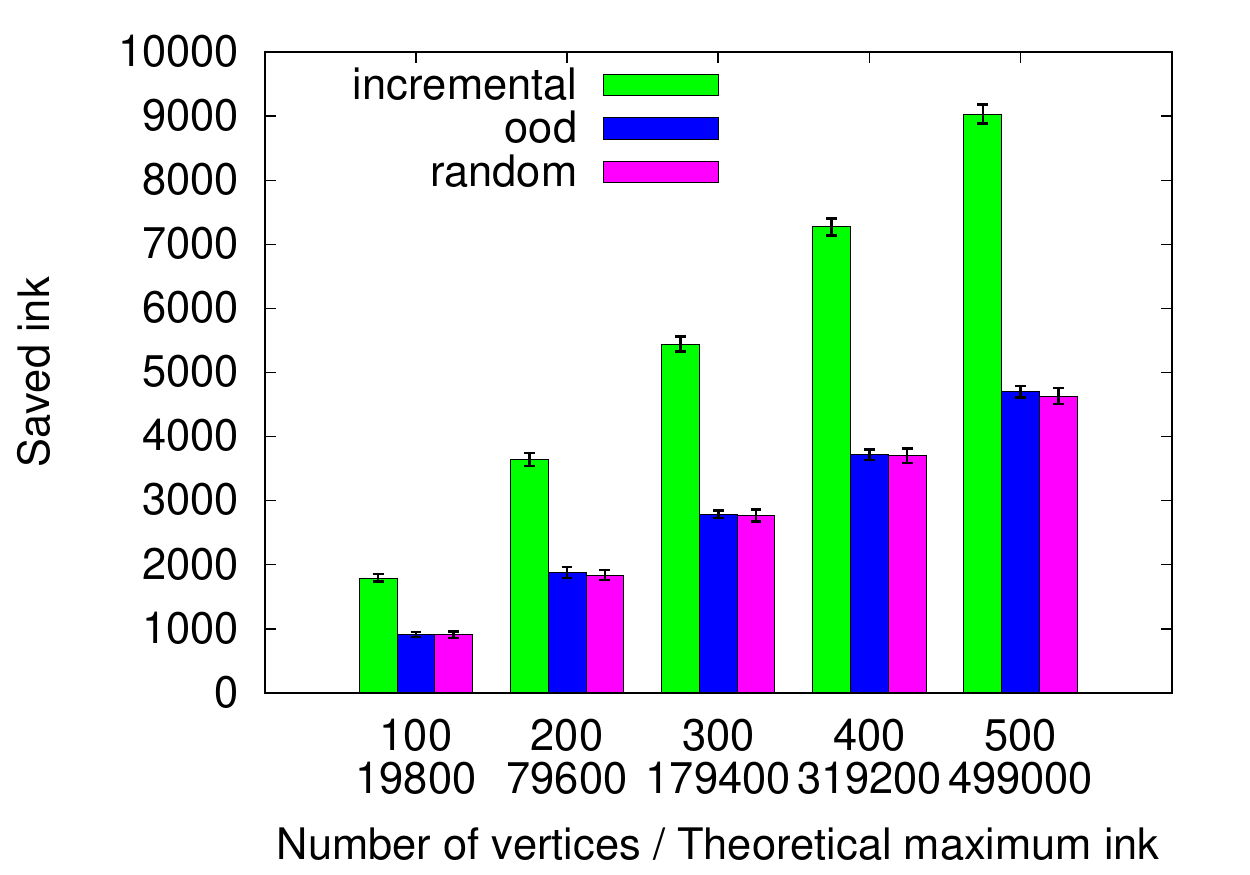} \\
   (a) & (b) \\
\end{tabular}
    \caption{(a) Ink consumption by varying density (the size of the graphs is fixed at $15$ vertices). (b) The difference between the theoretical maximum and the actual ink used by incremental, \texttt{ODD}, and random placement, for the second test-suite (graphs with $30\%$ of maximum possible edges).}\label{fi:results-2}
    \end{center}
\end{figure}
%
%
\subsection{Results of the Experiments}

\begin{figure}[htb]
\begin{center}
\begin{tabular}{c  c  }
   \includegraphics[width=6cm]{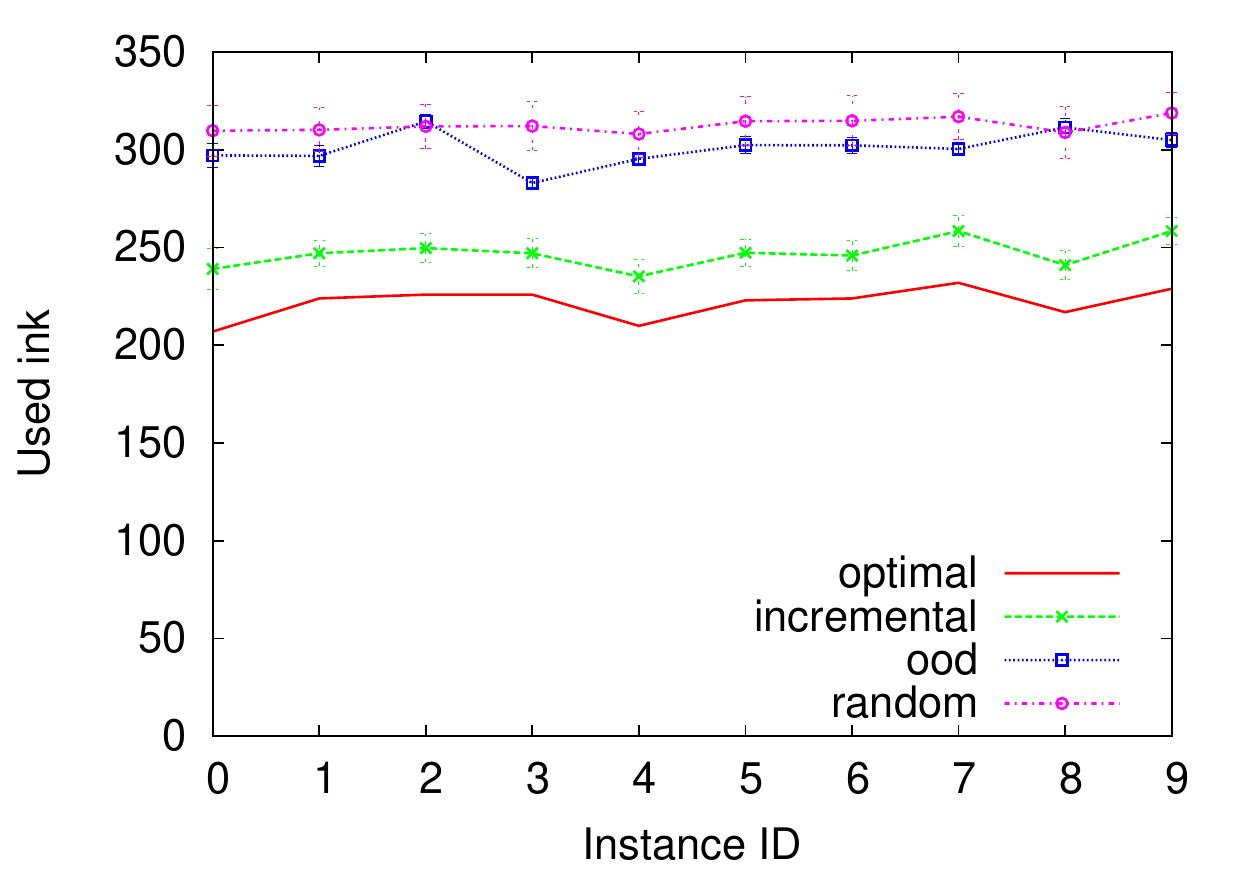} & 
   \includegraphics[width=6cm]{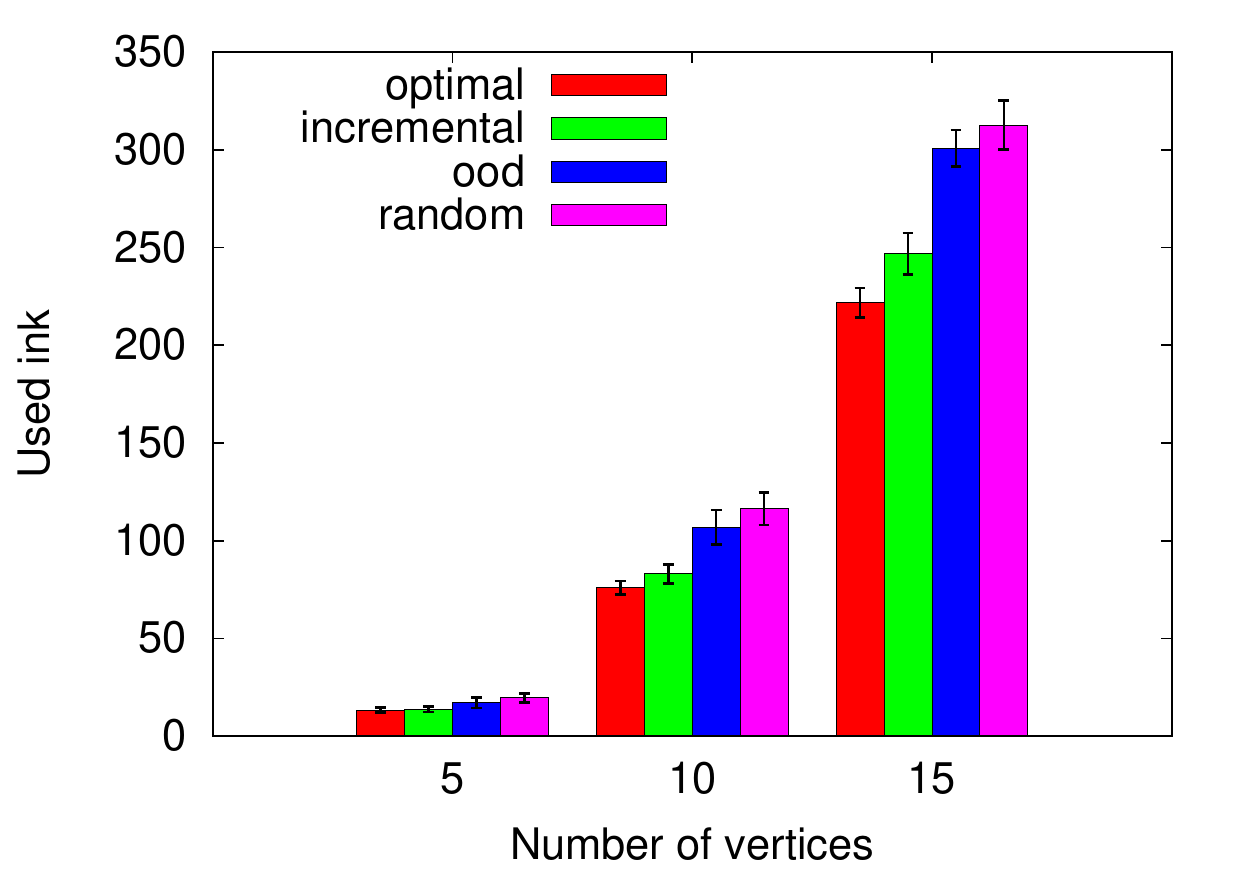} \\
   (a) & (b) \\
\end{tabular}
    \caption{(a) Ink used for the drawings of the ten graphs of the first graph suite with 15 vertices and 63 edges (corresponding to $30\%$ of the maximum possible). Optimal, incremental, ODD, and random placement are compared. For the latter three the average and the standard deviation over 100 runs is shown. (b) Ink consumption varying the size of the graphs \mbox{(fixing at $30\%$ edge density).}}\label{fi:results-1}
    \end{center}
\end{figure}

The results of the experiments are shown in Figs.~\ref{fi:results-1} and~\ref{fi:results-2}. Fig.~\ref{fi:results-1}(a) is devoted to the ten graphs with $15$ vertices and $63$ edges ($30\%$ of the maximum possible) of the first graph suite. On the $x$-axis the ten graphs are reported. 
The curves represent: (i) the ink used by the optimal algorithm; (ii) the average and the standard deviation of the ink used by Algorithm \texttt{IncrementaLDraw} over $100$ runs, each using a different BFS ordering obtained by starting from a random initial vertex and by shuffling the adjacency lists of the vertices; (iii) the average and the standard deviation of the ink used by Algorithm \texttt{OOD} over $100$ runs, each obtained from DAGView~\cite{kt-davlg-12} by shuffling the adjacency lists of the vertices; and (iv) the average and the standard deviation of the ink used by $100$ random placements of the vertices. 

From Fig.~\ref{fi:results-1}(a) it is apparent that the performances of \texttt{IncrementaLDraw} are always largely better than those of \texttt{OOD} and random placements, and not rarely are close to the optimum. Although this result could be anticipated (\texttt{OOD} was not conceived to reduce ink), we were surprised to note that, even with very small graphs and relatively many runs, the worst case for \texttt{IncrementaLDraw} is always comparable with the best case for \texttt{OOD} and significantly better than the best case of random placement. We found the same pattern in all plots obtained by changing densities and sizes.
 
Fig.~\ref{fi:results-1}(b) shows how the size impacts on ink, focusing on $30\%$ density graphs of the first graph suite. All the points are obtained by averaging ten values (for example, each bar for $15$ vertices of Fig.~\ref{fi:results-1}(b) is obtained by averaging the ten corresponding values of Fig.~\ref{fi:results-1}(a)). 
Fig.~\ref{fi:results-2}(b) further deepens this analysis showing how much ink each algorithm saves with respect to the maximum theoretical upper bound of $2n \times (n-1)$ for the second graph suite.
We observe that, when increasing the number of vertices, both the number of edges and the consumption of ink increase quadratically. At the same time, the ink saved by \texttt{IncrementaLDraw} with respect to \texttt{OOD} and random placement increases linearly.

Fig.~\ref{fi:results-2}(a) shows how density impacts on ink, focusing on graphs of $15$ vertices. Again, each point is the average of ten points obtained for ten different graphs (e.g., the values for density $30\%$ are obtained by averaging the ten values of Fig.~\ref{fi:results-1}(a)). For denser graphs, the difference among the alternative algorithms seems to reduce. This could be predicted as Lemma~\ref{le:complete-optimal} ensures that for any vertex order of a $K_n$ uses the same ink.

Overall, the experiments show that the ink consumption of \texttt{IncrementaLDraw} are closer to the optimum than to those of alternative algorithms and that the heuristics offers a good compromise between effectiveness and running times, even with a na\"{\i}ve implementation of Algorithm \texttt{OptAddVertex}. 


\section{Conclusions and Open Problems}\label{se:conclusions}

We introduced L-drawings, a novel paradigm for representing directed graphs. We investigated the problem of producing drawings with minimum ink, which turned out to be NP-complete. Our heuristics, however, proved to produce near-optimal solutions. 

Several problems remain open: (i) How much area and ink could be saved if vertices were allowed to share horizontal or vertical grid lines, provided that the drawing is still unambiguous?
(ii) Does there exist an ordering of the vertices such that \texttt{IncrementaLDraw} produces a minimum-ink drawing? (iii) Problem \textsc{Profile}, which we reduced to show the NP-hardness of \textsc{MILD}, is linear-time solvable for trees~\cite{dgpt-mp-91} and for square grids~\cite{dpps-ctslm-00}; what is the complexity of computing minimum-ink L-drawings for these families of graphs? 

Finally, although in~\cite{dmpt-hvdgu-14} it is shown that overloaded orthogonal drawings are superior to matrix representations under several respects, it would be interesting to contrast both these representations with L-drawings in an extensive user study.

\bibliographystyle{splncs03}
\bibliography{bibliography}

\end{document}